 \newcommand{\ud}{\,\mathrm{d}}
\newcommand{\evec}{{\bf{e}}}
\newcommand{\yvec}{{\bf{y}}}
\newcommand{\wvec}{{\bf{w}}}
\newcommand{\xvec}{{\bf{x}}}
\newcommand{\gvec}{{\bf{g}}}
\newcommand{\etavec}{{\bf{\eta}}}
\newcommand{\zerovec}{{\bf{0}}}
\newcommand{\Amat}{{\bf{A}}}
\newcommand{\Bmat}{{\bf{B}}}
\newcommand{\Dmat}{{\bf{D}}}
\newcommand{\Hmat}{{\bf{H}}}
\newcommand{\Jmat}{{\bf{J}}}
\newcommand{\define}{\stackrel{\triangle}{=}}
\def\bsigma{{\mbox{\boldmath $\Sigma$}}}
\def\etavec{{\mbox{\boldmath $\eta$}}}
\def\thetavec{{\mbox{\boldmath $\theta$}}}
\def\thetavecsmall{{\mbox{\boldmath {\scriptsize $\theta$}}}}
\newcommand{\be}{\begin{equation}}
\newcommand{\ee}{\end{equation}}
\newcommand{\beqna}{\begin{eqnarray}}
\newcommand{\eeqna}{\end{eqnarray}}
\newtheorem{theorem}{Theorem}
\newtheorem{lemma}{Lemma}
 \newtheorem{definition}{Definition}
  \newtheorem{proposition}{Proposition}
\def\ie{{\it i.e.,\ \/}}
\title{Estimation after Parameter Selection: Performance Analysis and Estimation Methods}
\author{Tirza~Routtenberg,~\IEEEmembership{Member,~IEEE},
        and~Lang~Tong,~\IEEEmembership{Fellow,~IEEE}
\thanks{T. Routtenberg and L. Tong are with School of Electrical and Computer Engineering, Cornell University, Ithaca, NY 14853, United States, Email: \{tsr43,lt35\}@cornell.edu.}
\thanks{This work is supported in part by the National Science Foundation under
Grant CNS-1135844.}
}
\begin{document}
\maketitle
\begin{abstract}
In many practical parameter estimation problems,
  prescreening and parameter selection are performed prior to estimation.
In this paper, we consider
the problem of estimating a preselected  unknown deterministic parameter chosen from a parameter set
based on observations according to a predetermined selection rule, $\Psi$.
The data-based parameter selection process may impact the subsequent estimation by introducing a selection bias 
 and creating coupling between decoupled parameters. This paper introduces a post-selection mean squared error (PSMSE) criterion as a performance measure.  
  A corresponding
 Cram$\acute{\text{e}}$r-Rao-type bound on the PSMSE of any $\Psi$-unbiased estimator is derived,
where the  $\Psi$-unbiasedness is in the
 Lehmann-unbiasedness  sense.
The post-selection maximum-likelihood (PSML) estimator is presented. 
It is proved that if there exists an $\Psi$-unbiased estimator that achieves the $\Psi$-Cram$\acute{\text{e}}$r-Rao bound (CRB),
\ie an $\Psi$-efficient estimator,
 then  it is produced by the PSML estimator.
In addition,  iterative  methods are developed for the practical implementation of the PSML estimator.
Finally, the proposed $\Psi$-CRB and  PSML estimator are
 examined  in estimation after parameter selection with  different distributions.
\end{abstract}
\begin{keywords}
Non-Bayesian parameter estimation,
$\Psi$-Cram$\acute{\text{e}}$r-Rao bound ($\Psi$-CRB),
estimation after parameter selection,
post-selection maximum-likelihood (PSML) estimator,
Lehmann unbiasedness.
\end{keywords}

\section{Introduction}
\label{sec:intro}
Statistical inference on multiple parameters often involves a preliminary data-driven parameter selection stage.
In mathematical statistics literature,
 estimation after parameter selection refers to the problem  in which  estimation is  performed only after a specific population,  related to a specific parameter  has been selected from a set of possible independent populations.
The population selection is based on some predetermined data-based
selection rule, $\Psi$,  where $\Psi$ may not be optimal in any sense.
In cognitive radio communications \cite{Haykin_cognitive_radio}, for example, the parameters of a channel are estimated only after the channel has been identified in the white space,
 often
 thresholding on the empirical signal to noise ratio as a selection criterion. 
 In medical diagnoses, a special test is administered only after  other preliminary tests indicate that a patient may have contracted a certain disease.   
  Other applications include multiple radar subset selection problems  \cite{Petropulu}, medical experiments
	\cite{BIMJ200810442}, genetic studies \cite{Zollner2007}, and estimation in wireless sensor networks after  sensor node selection 
	\cite{Nehorai_Zhao}.

Despite the importance of estimation after parameter selection,
the impact of selection procedure on the fundamental limits of estimation performance for general parametric models is not well understood.
  It is known that the selection process affect the statistical properties of the subsequent estimation  
\cite{mukhopadhyay1994multistage}. In particular,
 the bias and mean squared error (MSE) criterion are inappropriate (e.g. \cite{Sarkadi}, \cite{Putter_Rubinstein}) and the conventional Cram$\acute{\text{e}}$r-Rao  bound (CRB) is unsuited since it does not take 
the prescreening process into account. 
In addition, the selection may create  coupling between originally decoupled parameters 
 and it usually induces a bias, or
``winner’s curse" \cite{Zollner2007}, on any estimator of the selected unknown parameter.
 For example, for the 
exponential family of distributions,
 no unbiased estimator  exists for classical estimation after  selection with 
independent  population and
 a single sampling stage and 
 data-based selection rules
{\cite{Putter_Rubinstein,Cohen_Sackrowitz,Vellaisamy2009}}.

\subsection{Summary of results}
In this work, we are interested in the problem of estimation after parameter selection,
 \ie estimating a subset of parameters after they are selected based on a data-based selection rule. 
 This problem is a generalization of the classical estimation after  selection problem \cite{mukhopadhyay1994multistage},
where each parameter is associated with
a specific non-overlapped set of observations, named a {\em{population}}, and
the populations are assumed to be independent.
Another special case of the model considered here is the problem of estimation in the presence of nuisance parameters 
\cite{Kay_estimation}, \cite{Andrea_Mengali_Reggiannini_1994}.  In such a problem the parameter of interest is chosen in advance {\em{independent of data}}.    

In order to characterize the estimation performance of the selected parameter, 
we introduce   the post-selection MSE (PSMSE) criterion
and  the concept of $\Psi$-unbiasedness by using the  non-Bayesian Lehmann-unbiasedness definition \cite{Lehmann}.
Then we develop the appropriate CRB-type bound on the PSMSE of any $\Psi$-unbiased estimator.
In addition, we   present the  post-selection maximum-likelihood (PSML) estimator, which is  
the corresponding 
 maximum-likelihood (ML) estimator for estimation after parameter selection problems.
 We show that if  an $\Psi$-unbiased estimator exists that achieves the $\Psi$-CRB,
 it is produced by the PSML estimator.
 We further develop iterative  methods for the practical implementation of the PSML estimator.
Finally, the proposed
 $\Psi$-CRB and PSML estimator  are examined on
 uniform, exponential, and Gaussian distributions with the 
sample mean selection (SMS) rule.
%


\subsection{Related works}
The earliest works on classical estimation after selection with independent populations are by Sarkadi, Putter, and Rubinstein 
in  \cite{Sarkadi} and \cite{Putter_Rubinstein}.
These works,  as well as studies that appear in mathematical statistics literature, 
assume {\em{random}} unknown parameters  
 and  show  that  no unbiased estimator exists for  independent Gaussian  populations.
In mathematical statistics literature, estimation after selection with independent populations has received considerable attention over the years,
 where most of the work 
is restricted to specific parametric models,
such as the Gaussian {\cite{BIMJ200810442,Cohen_Sackrowitz,Gibbons,Lu_Sun_Wu_2013}},
Gamma {\cite{Vellaisamy_Sharma_1988,Misra_2006}}, and
uniform \cite{song_uniform} models.
Several estimation methods have been proposed to reduce the {\em{selection bias}}
  by employing various iterative methods for bias correction 
(e.g. \cite{WHITEHEAD_1986}, \cite{Stallard_Todd_2005}).
Shrinkage, minimax, and Bayesian techniques have also been studied
\cite{Shrinkage}, \cite{minmax}.
For cases in which an unbiased estimator exists, 
the U-V estimator by Robbins can be used \cite{Robbins_1988}.
The current paper provides a general {\em{non-Bayesian}} framework,
 \ie where the parameters to be estimated after selection are deterministic
  and
 the underlying statistical models are general and admit general dependencies across parameters. 
In particular,  we establish the basic theory of $\Psi$-efficiency post selection estimation that includes the fundamental limits of estimation, ways to achieve efficiency when efficient estimator exists, and practical approaches.

In the context of signal processing,
the works in
\cite{hero2} and \cite{Hero3} investigate the Bayesian estimation after the detection of an unknown data region of interest.  The problem of post-detection estimation,
 or estimation after data censoring,
 is considered by  Chaumette, Larzabal, and Forster 
 \cite{Chaumette2005}, \cite{energy}, who derive a novel CRB on the conditional MSE,
involving conditional Fisher information.
It should be noted that in {\cite{Chaumette2005,energy,hero2,Hero3}}, the selection rule selects
the {\em{data}} to be used, while in our proposed model the {\em{parameter}} to be estimated is selected
and all the data can be used for estimation.
Selection and ranking  are highly related approaches \cite{mukhopadhyay1994multistage}.
Detection and estimation  after ranking and order statistics procedures
  are proposed in 
{\cite{Fishler_Messer_OS,Fishler_Messer2000}}
and are shown to have both practical and theoretical advantages  in terms of computational complexity and performance.
An empirical Bayes estimator for exponentially distributed populations is proposed in \cite{Efron_Tweedie}.
For the  problem of estimation after {\em{model}} selection, 
a bootstrap method for computing standard errors and confidence intervals  is considered
in \cite{Efron}, a post-selection  lasso method is developed in
\cite{Lee_Taylor_2014},
and
 the  CRB is derived in  \cite{Stoica} for model order selection.
However, it should be emphasized that in the case of estimation after parameter selection presented here, 
the measurement model is assumed to be  {\em{known}} and  there are  no modeling errors.
 In contrast, in estimation after  model selection
\cite{Efron}, \cite{Stoica}, the measurement model is {\em{unknown}} and is selected from a finite collection of competing models.

\subsection{Organization and notations}
 The remainder of the paper is organized as follows:
Section \ref{The_model} presents the mathematical model for the problem  of estimation after parameter selection.
The $\Psi$-unbiasedness in the  Lehmann sense and the $\Psi$-CRB  are derived in Section \ref{CRB_section} and estimation methods are developed
in Section \ref{estimation_methods_section},  for estimation after parameter selection.
Finally, the proposed $\Psi$-CRB and $\Psi$-unbiased estimators are   evaluated via simulations for the linear Gaussian model in Section \ref{simulation_section}.
Our conclusions appear in Section \ref{diss}.

In the rest of this paper, vectors are denoted by boldface
lowercase letters and matrices  by boldface uppercase
letters.
The operators  $(\cdot)^{\mbox{\tiny $T$}}$ and  $(\cdot)^{-1}$ denote the transpose and inverse,   respectively.
The vector $\evec_m \in{\mathbb{R}}^{M}$ is a vector of all zeros except for a 1 at the $m$th
position,  $\forall m=1,\ldots,M$, and
the $(m,k)$th element 
of the matrix $\Amat$ is denoted by $[ \Amat]_{m,k}$.
 The notations
 $\delta_{m,k}$
and ${\mathbf{1}}_A$ denote  the  Kronecker delta function and the indicator function of an
event $A$, respectively. 
The $m$th element of the gradient vector  $\nabla_\thetavecsmall  c$ 
is given by $\frac{\partial  c}{\partial \theta_m}$,
where $\thetavec=[\theta_1,\ldots,\theta_M]^{\mbox{\tiny $T$}}$, $c$ is an arbitrary scalar function of $\thetavec$,
$\nabla_\thetavecsmall^T  c \define (\nabla_\thetavecsmall  c)^T$, and
$\nabla_\thetavecsmall^2  c \define\nabla_\thetavecsmall \nabla_\thetavecsmall^T  c$.
The notations ${\rm{E}}_{\thetavecsmall} [\cdot]$ and ${\rm{E}}_{\thetavecsmall} [\cdot|A]$ represent the  expected and conditional-expected value
of its argument,
 parameterized by a deterministic parameter $\thetavec$
and given event $A$.

\section{Problem formulation}
\label{The_model}
Let $(\Omega_\xvec,{\cal{F}},P_\thetavecsmall)$ denote a probability space, where
 $\Omega_\xvec$ is the observation  space,
 ${\cal{F}}$ is the $\sigma$-algebra on $\Omega_\xvec$,
 and $\left\{P_\thetavecsmall \right\}$ is a family of probability measures parametrized by the deterministic parameter vector  $\thetavec=[\theta_1,\ldots,\theta_M]^{\mbox{\tiny $T$}}\in{\mathbb{R}}^M$.
Let $\hat{\thetavec}=[\hat{\theta}_1,\ldots,\hat{\theta}_M]^{\mbox{\tiny $T$}}$
 be an estimator of 
 $\thetavec$,   based on a random observation vector $\xvec\in\Omega_\xvec$,
\ie  $\hat{\thetavec} :\Omega_\xvec\rightarrow{\mathbb{R}}^M$.
For each probability measure  $P_\thetavecsmall$, the function $f(\xvec;\thetavec) $
 denotes the corresponding probability
 density function (pdf) of $\xvec$.
All the estimators in this paper are assumed to be in
the Hilbert space of absolutely
square integrable functions with respect to (w.r.t.) $P_\thetavecsmall$,
 ${\mathcal{L}}_2(\thetavec)$.

The basic structure of the proposed model for  estimation after parameter selection  consists of two stages: 
first, a parameter $\theta_m$ is selected 
according to a predetermined data-driven selection rule,
 $\Psi$,
 and then, this parameter is estimated.
	In this work, we assume that the selection rule  $\Psi$ is given 
and 
 we focus on the estimation  of the selected parameter.
The proposed model is presented schematically  in Fig. 
\ref{model2}. The extension for a selection of a subset of unknown parameters, \ie the multiparameter case,  is  discussed  in Section \ref{subset}.

A data-based {\em{selection rule}}
 is a deterministic function $\Psi:\Omega_\xvec\rightarrow\{1,\ldots,M\}$ that
selects a parameter based on the observation vector, $\xvec$.
That is, if $\Psi(\xvec)=m$, then the estimation goal is to estimate the parameter
$\theta_m$ based on the same observation vector $\xvec$.
We assume that the deterministic sets
 ${\mathcal{A}}_{m}\define \{\xvec: \xvec\in\Omega_\xvec,\Psi(\xvec)=m\}$,
$m=1,\ldots,M$,  partition $\Omega_\xvec$.
For the sake of simplicity of notation, in the following $\Psi(\xvec)$ is replaced by $\Psi$.
By using Bayes' rule it can be seen that the pdf of the observation conditioned on the event  $\xvec\in {\mathcal{A}}_{m}$
is
\be
\label{Bayes}
f(\xvec|\Psi=m;\thetavec)=
\frac{f(\xvec;\thetavec)}{\Pr(\Psi=m;\thetavec)},~~~
\forall \xvec\in {\mathcal{A}}_{m}
\ee
and is undefined otherwise,   where $ \Pr(\Psi=m;\thetavec)$ denotes the probability of  this event
for all $m=1,\ldots,M$.
\begin{figure}[htb]
\vspace{-0.25cm}
\centerline
{\psfig{figure=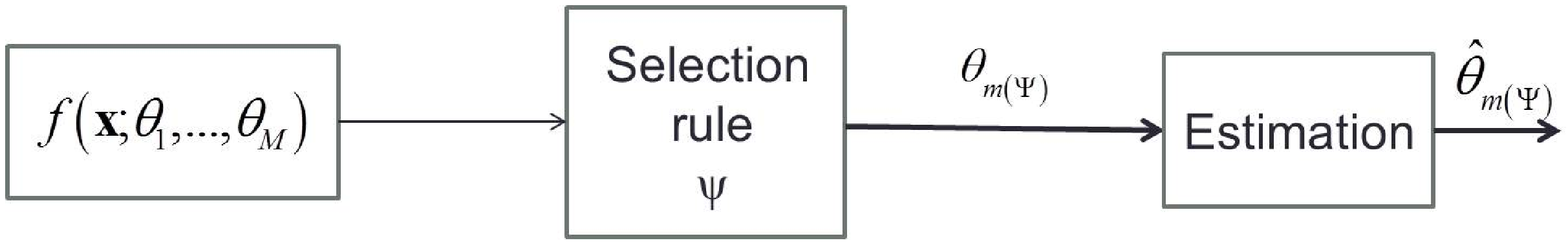,width=8cm}}
\caption{Schematic model  of  estimation after parameter selection.}
\label{model2}
\vspace{-0.25cm}
\end{figure}

A special case of the proposed model of estimation after parameter selection
is estimation in the presence of additional  undesired  deterministic nuisance parameters 
\cite{Kay_estimation}, \cite{Andrea_Mengali_Reggiannini_1994}.
Here, the selection of the desired parameter is performed in advance,
independently of the data, $\xvec$. Therefore, 
the  statistical characteristics, such as CRB and bias, are not affected by the selection process and are equal to those of the multiparameter estimation, in which the nuisance parameters are  estimated as well \cite{Kay_estimation}, \cite{Noam_Messer_2009}.

A more challenging and relevant  application of the proposed model is the classical estimation after  selection  with independent populations  \cite{Sarkadi}, \cite{Putter_Rubinstein},
which is presented schematically  in Fig. \ref{model1}. 
In  estimation after selection  with independent populations,
a given set of $M$ independent populations is assumed. These populations
might represent, for example, a set of $M$ different communication channels.
For any $m=1,\ldots,M$, it is supposed that  $N_m \geq 1$  random observations  are drawn from the $m$th population
to generate the  $m$th observation vector, $\yvec_m=[ y_m[0],\ldots,y_m[N_m-1]]^{\mbox{\tiny $T$}}$, with the associated marginal pdf, $f_m(\yvec_m;\theta_m)$,
in which  $\theta_m \in{\mathbb{R}}$  
denotes  the  unknown parameter related to  the $m$th population.
In this case, the  observation vector is given by
$\xvec=[\yvec_1^{\mbox{\tiny $T$}},\ldots,\yvec_M^{\mbox{\tiny $T$}}]^{\mbox{\tiny $T$}}$,
the joint pdf of the $M$ populations  is 
 $f(\xvec;\thetavec)=\prod_{m=1}^M f_m(\yvec_m;\theta_m) $, and  the selection  of a  parameter $\theta_m$ is equivalent 
to the selection of the $m$th  population or channel.
\vspace{-0.25cm}
\begin{figure}[htb]
\centerline
{\psfig{figure=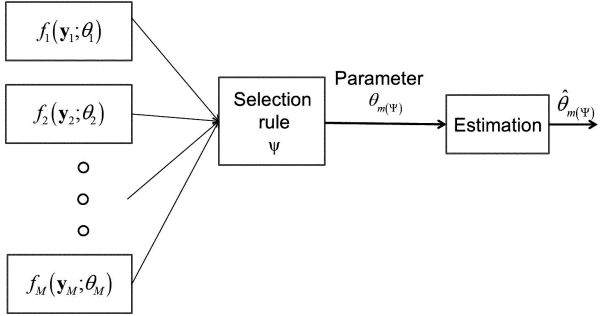,width=8cm}}
\caption{Schematic model  of  classical estimation after selection  with independent populations.}
\label{model1}
\vspace{-0.25cm}
\end{figure}

In this work, we are interested in the parameter estimation of the unknown deterministic vector 
$\thetavec$,  where only estimation errors  of the selected parameter are taken into consideration and the selection rule is  predetermined.
Therefore,  for a given selection rule, $\Psi$,
we use the following post-selection squared-error  (PSSE) cost {\cite{BIMJ200810442,mukhopadhyay1994multistage,Sarkadi,ICASSP2014_est_after_sel}}:
\beqna
\label{cost_def}
C^{(\Psi)}(\hat{\thetavec},\thetavec)\define\sum_{m=1}^M (\hat{\theta}_m-\theta_m)^2{\mathbf{1}}_{\{\Psi=m\}}.
\eeqna
The corresponding  PSMSE is given by 
\beqna
\label{PSMSE}
{\rm{E}}_{\thetavecsmall}\left[C^{(\Psi)}(\hat{\thetavec},\thetavec)\right]
={\rm{E}}_{\thetavecsmall}\left[\sum_{m=1}^M (\hat{\theta}_m-\theta_m)^2{\mathbf{1}}_{\{\Psi=m\}}\right]
\nonumber\\=
\sum_{m=1}^M \Pr(\Psi=m;\thetavec){\rm{E}}_{\thetavecsmall}\left[(\hat{\theta}_m-\theta_m)^2|\Psi=m\right],
\eeqna
where  (\ref{PSMSE}) 
is calculated by using  the densities $f(\xvec;\thetavec)$ and $f(\xvec|\Psi=m;\thetavec)$ for
the first and the second terms, respectively.
The  component-wise  PSMSE of a specific parameter is defined as
 \[ {\rm{E}}_{\thetavecsmall}\left[\left.(\hat{\theta}_m-\theta_m)^2\right|\Psi=m\right], ~~~m=1,\ldots,M.\]
The use of the indicator functions implies that the PSMSE may be equal for two  estimators 
that have different values with a  nonzero probability, \ie outside the subset indicated by these indicators. 
In fact, $\hat{\theta}_m$ affects the PSMSE
only for  observations $\xvec\in {\mathcal{A}}_{m}$.

In the mathematical statistics literature,  the unknown parameter for  estimation after  selection  with independent populations  is usually defined as 
$
\sum_{m=1}^M \theta_m {\mathbf{1}}_{\{\Psi=m\}}$, which  has both random and deterministic components.
In this work, we are interested in the estimation of the {\em{deterministic}} parameter $\thetavec$. 
 The notion of non-Bayesian estimation allows the derivation of the corresponding CRB-type lower bound
and  non-Bayesian estimation methods.

\section{The Cram$\acute{\text{e}}$r-Rao-type bound for estimation after parameter selection}
\label{CRB_section}
The CRB (e.g. \cite{Kay_estimation}, \cite{point_est})   provides a lower bound on the MSE of any mean-unbiased 
estimator and is used as a benchmark to study the optimality of practical parametric estimators.
In this section, a CRB-type lower bound for estimation after parameter selection is derived.
The proposed  bound is a lower bound on the PSMSE
of any  Lehmann unbiased estimator, as described in the following.
\subsection{$\Psi$-unbiasedness}
The mean-unbiasedness constraint is   commonly used in non-Bayesian parameter estimation \cite{Kay_estimation}.
However, a mean-unbiased estimator  is inappropriate
for  estimation after parameter selection problems, since we are interested only in errors of the selected parameter
 (See, e.g.  \cite{mukhopadhyay1994multistage}, \cite{Vellaisamy2009}).
Lehmann \cite{Lehmann}  proposed a generalization of the  unbiasedness concept, which is based on the
considered cost function. 
In this section, the general Lehmann unbiasedness is used to define the unbiasedness for
 estimation after parameter selection problems. 

\begin{definition}
\label{unbiased_definition}
The  estimator 
$\hat{\thetavec}$  is said to be an unbiased estimator of $\thetavec$ in the Lehmann sense \cite{Lehmann} w.r.t. the scalar nonnegative cost function $C(\hat{\thetavec},\thetavec)$ 
 if 
\be
\label{defdef}
{\rm{E}}_{\thetavecsmall}\left[C(\hat{\thetavec},\etavec) \right] \geq {\rm{E}}_{\thetavecsmall}\left[C(\hat{\thetavec},\thetavec) \right],~~~\forall \etavec,\thetavec\in\Omega_\thetavecsmall,
\ee
where  $\Omega_\thetavecsmall$ is the parameter space.
\end{definition}

The Lehmann unbiasedness definition implies that an estimator is unbiased  if, on average, 
 it is ``closer'' to  the true parameter $\thetavec$
  than to any other value in the parameter space.
The measure of ``closeness'' between the estimator  and the parameter is the  mean of the  cost function,
 $C(\hat{\thetavec},\thetavec)$. 
For example, 
it is shown in \cite{Lehmann} that under the squared-error cost function the 
Lehmann unbiasedness  in (\ref{defdef}) is reduced to the conventional mean-unbiasedness, 
${\rm{E}}_{\thetavecsmall}[\hat{\thetavec}]=\thetavec$, $\forall \thetavec\in \Omega_\thetavecsmall$.
Additional examples for  Lehmann unbiasedness with different cost functions can be found, for example,  in 
\cite{Lehmann}, \cite{PCRB_J}, and \cite{Routtenberg_cyclic}.
The following proposition describes the Lehmann unbiasedness for the
estimation after parameter selection,
named {\em{$\Psi$-unbiasedness}}.
\begin{proposition}
\label{unbiasedness_prop}
An estimator 
$\hat{\thetavec}:\Omega_\xvec\rightarrow {\mathbf{R}}^M$
is  an unbiased estimator of $\thetavec\in {\mathbf{R}}^M$ in the Lehmann sense  w.r.t. the
PSSE cost and the selection rule $\Psi$ {\em{iff}}
 \beqna
 \label{unbias6}
 {\rm{E}}_{\thetavecsmall}\left[(\hat{\theta}_m-\theta_m){\mathbf{1}}_{\{\Psi=m\}}
\right]=0,~\forall m=1,\ldots,M,~\forall \thetavec \in{\mathbb{R}}^M,
\eeqna
or, equivalently, 
\beqna
\label{10}
{\rm{E}}_{\thetavecsmall}\left[\hat{\theta}_m-\theta_m|\Psi=m\right]=0,
~~~\forall \thetavec\in{\mathbb{R}}^M
\eeqna
for all $ m=1,\ldots,M$ such that  $\Pr(\Psi=m;\thetavec )\neq 0$.
\end{proposition}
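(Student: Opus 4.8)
The plan is to read off the Lehmann-unbiasedness inequality (\ref{defdef}) for the specific PSSE cost (\ref{cost_def}) and recognize it as the statement that the true value $\thetavec$ globally minimizes the map $\etavec \mapsto {\rm{E}}_{\thetavecsmall}[C^{(\Psi)}(\hat\thetavec,\etavec)]$. First I would substitute the cost and use the fact that the selection indicators $\mathbf{1}_{\{\Psi=m\}}$ depend only on the observation $\xvec$ and not on the dummy variable $\etavec$; by linearity of expectation (all terms are integrable since $\hat\thetavec\in{\mathcal{L}}_2(\thetavec)$) this yields
\[
g(\etavec) \define {\rm{E}}_{\thetavecsmall}[C^{(\Psi)}(\hat\thetavec,\etavec)] = \sum_{m=1}^M {\rm{E}}_{\thetavecsmall}\big[(\hat\theta_m-\eta_m)^2\mathbf{1}_{\{\Psi=m\}}\big].
\]

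The key structural observation is that the $m$th summand depends on $\etavec$ only through its $m$th entry $\eta_m$, so $g$ is a separable sum $g(\etavec)=\sum_{m} g_m(\eta_m)$, and minimizing $g$ over $\etavec$ is equivalent to minimizing each scalar function $g_m$ separately. Expanding the square,
\[
g_m(\eta_m) = \Pr(\Psi=m;\thetavec)\,\eta_m^2 - 2{\rm{E}}_{\thetavecsmall}[\hat\theta_m\mathbf{1}_{\{\Psi=m\}}]\,\eta_m + {\rm{E}}_{\thetavecsmall}[\hat\theta_m^2\mathbf{1}_{\{\Psi=m\}}],
\]
a quadratic in $\eta_m$ whose leading coefficient is the nonnegative selection probability $\Pr(\Psi=m;\thetavec)$.

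Then I would argue both directions via convexity. When $\Pr(\Psi=m;\thetavec)>0$, $g_m$ is strictly convex and hence has a unique minimizer; requiring this minimizer to be $\theta_m$ (as Lehmann unbiasedness demands, since $\thetavec$ must minimize $g$) is exactly the stationarity condition $\theta_m\Pr(\Psi=m;\thetavec) = {\rm{E}}_{\thetavecsmall}[\hat\theta_m\mathbf{1}_{\{\Psi=m\}}]$, which rearranges to (\ref{unbias6}). When $\Pr(\Psi=m;\thetavec)=0$, the indicator vanishes $P_{\thetavecsmall}$-a.s., so $g_m\equiv 0$ and (\ref{unbias6}) holds trivially (both sides are zero); conversely, (\ref{unbias6}) for every $m$ makes each $\theta_m$ a coordinate-wise minimizer, so $\thetavec$ minimizes $g$ and (\ref{defdef}) follows.

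Finally, the equivalence of (\ref{unbias6}) and (\ref{10}) follows directly from the definition of the conditional expectation given the event $\{\Psi=m\}$ in (\ref{Bayes}): for each $m$,
\[
{\rm{E}}_{\thetavecsmall}\big[(\hat\theta_m-\theta_m)\mathbf{1}_{\{\Psi=m\}}\big] = \Pr(\Psi=m;\thetavec)\,{\rm{E}}_{\thetavecsmall}\big[\hat\theta_m-\theta_m\,\big|\,\Psi=m\big],
\]
so dividing by the selection probability (where it is nonzero) turns (\ref{unbias6}) into (\ref{10}). The main obstacle I anticipate is handling the ``only if'' direction cleanly: one must ensure that the Lehmann inequality, a joint global-minimization statement over all $\etavec$ simultaneously, really reduces to the per-coordinate stationarity conditions rather than merely a first-order necessary condition. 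Separability together with strict convexity of each $g_m$ is precisely what makes this reduction exact, and the degenerate zero-probability coordinates must be treated separately so that (\ref{unbias6}) is neither over-claimed nor lost.
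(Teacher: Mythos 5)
Your proof is correct and takes essentially the same route as the paper's Appendix A: both substitute the PSSE cost into the Lehmann condition and reduce it, coordinate by coordinate, to minimizing a quadratic in $\eta_m$ whose minimizer must equal $\theta_m$, which is exactly (\ref{unbias6}), and both then pass to (\ref{10}) by dividing by $\Pr(\Psi=m;\thetavec)$ where it is nonzero. The only cosmetic difference is that you phrase the necessity direction through separability and strict convexity of each $g_m$ (vertex of the parabola), while the paper perturbs a single coordinate by $\pm\varepsilon$ and uses the sign-arbitrariness of $\varepsilon$ --- the same computation in different clothing.
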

{\em{Proof:}} The proof appears  in Appendix A.

It can be seen that  the Lehmann unbiasedness definition in (\ref{unbias6}) and (\ref{10})  is a function of the given selection rule. 
Therefore, in the following,
 an estimator $\hat{\thetavec}$
 is said to be an $\Psi$-unbiased estimate of $\thetavec$ for the selection rule
 $\Psi$, if
 (\ref{unbias6})   (or, equivalently, (\ref{10}))   is satisfied.
The concept of risk-unbiased in the Lehmann sense for the classical estimation after selection
with independent populations
 has been discussed in the literature  for the random parameter $
\sum_{m=1}^M \theta_m {\mathbf{1}}_{\{\Psi=m\}}$ and 
various cost functions (e.g. \cite{conditionally1} and \cite{Nematollahi_Shariati_2012}).

\subsection{The $\Psi$-CRB}
Obtaining  an estimator with the minimum  PSMSE   among all $\Psi$-unbiased estimators  is
usually not tractable and a uniform $\Psi$-unbiased minimum PSMSE  estimator may not exist \cite{Sarkadi}.
 Thus,
lower bounds on the performance of any  $\Psi$-unbiased estimator are useful for performance
analysis and system design.
In the following, a new version of the CRB for estimation after parameter selection is derived.
It should be noticed  that, in general, the minimum PSMSE estimator is not unique 
since only  the estimation errors of the selected parameter are taken into consideration.

Let us define the following 
   {\em{post-selection Fisher information matrix (PSFIM)}} of the $m$th component:
\beqna
\label{JJJdef}
\Jmat_m(\thetavec,\Psi)&\define& {\rm{E}}_{\thetavecsmall}
\left[\nabla_\thetavecsmall \log f(\xvec|\Psi=m;\thetavec)
\right.
\nonumber\\&&
\left.\left.\times
\nabla_\thetavecsmall^T \log f(\xvec|\Psi=m;\thetavec)  \right|\Psi=m
\right],
\eeqna
for all $m=1,\ldots,M$.
In addition, we define the following  conditions that are a modified form of the well-known CRB regularity conditions 
  (e.g. \cite{point_est},  pp. 440-441).
\renewcommand{\theenumi}{C.\arabic{enumi}} 
\begin{enumerate}
\item
\label{cond1}
The post-selection likelihood gradient vector,
$\nabla_\thetavecsmall \log f(\xvec|\Psi=m;\thetavec)$,
 exists and is finite 
for any $\thetavec\in {\mathbb{R}}^M$,
 $\xvec\in{\mathcal{A}}_{m}$, and
 $\forall m=1,\ldots,M$.
That is, we assume that
the $m$th   PSFIM, $\Jmat_m(\thetavec,\Psi)$,
is a well-defined,  nonsingular, and nonzero matrix
for any $ \thetavec\in{\mathbb{R}}^M$ and $\forall m=1,\ldots,M$. 
\item
\label{cond2}  The 
operations of integration w.r.t. $\xvec$ and differentiation w.r.t. $\thetavec$ can be interchanged    as follows:
\beqna
\label{uu}
\int_{{\mathcal{A}}_{m}}\nabla_\thetavecsmall \left(g(\xvec,\thetavec)f(\xvec|\Psi=m;\thetavec)\right){\ud}\xvec\hspace{2cm}\nonumber\\=\nabla_\thetavecsmall
{\rm{E}}_{\thetavecsmall}\left[ g(\xvec,\thetavec)|\Psi=m\right],
\eeqna
for any $\thetavec\in{\mathbb{R}}^M$ and
for any differentiable and measurable  function   $g(\xvec,\thetavec)$.
\end{enumerate} 
\renewcommand{\theenumi}{\arabic{enumi}}
\begin{theorem}
\label{Th1} ($\Psi$-CRB)
Let the  regularity conditions \ref{cond1}-\ref{cond2}  be satisfied
and $\hat{\thetavec}$ be an $\Psi$-unbiased
estimator of $\thetavec\in{\mathbf{R}}^M$ for a given selection rule, $\Psi$, 	with a finite second moment.
Then,
the PSMSE is bounded by 
the following Cram$\acute{\text{e}}$r-Rao-type lower bound:
\beqna
\label{bound1}
{\rm{E}}_{\thetavecsmall}\left[C^{(\Psi)}(\hat{\thetavec},\thetavec)\right]
\geq  B^{(\Psi)}(\thetavec),
\eeqna
where 
\beqna
\label{bound2}
B^{(\Psi)}(\thetavec)&\define& \sum_{m=1}^M \Pr(\Psi=m;\thetavec)\left[\Jmat_m^{-1}(\thetavec,\Psi)\right]_{m,m},
\eeqna
and  $\Jmat_m(\thetavec,\Psi)$ is the PSFIM defined in 
(\ref{JJJdef}).
Furthermore, the component-wise $\Psi$-CRB  on the PSMSE of a specific parameter  is given by
\be
\label{bound_marginal}
{\rm{E}}_{\thetavecsmall}\left[(\hat{\theta}_m-\theta_m)^2|\Psi=m\right]
\geq
\left[\Jmat_m^{-1}(\thetavec,\Psi)\right]_{m,m},
\ee
for all $ m=1,\ldots,M$.
The equality holds in (\ref{bound1}) and (\ref{bound_marginal})  {\em{iff}} there exist   functions
 $h_m(\thetavec)$, $m=1,\ldots,M$, such that
\beqna
\label{eq_cond}
\sum_{l=1}^M 
\frac{\partial \log f(\xvec|\Psi=m;\thetavec)}{\partial \theta_l}\left[\Jmat_m^{-1}(\thetavec,\Psi)\right]_{l,m}=\hspace{1cm}
\nonumber\\h_m(\thetavec)
 (\hat{\theta}_m-\theta_m),~~~ \forall m=1,\ldots,M
\eeqna
 almost surely (a.s.) $\xvec\in{\mathcal{A}}_{m}$.
\end{theorem}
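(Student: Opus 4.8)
The plan is to reduce the theorem to a family of conditional Cram\'{e}r--Rao inequalities, one for each selection region $\mathcal{A}_m$, and then recombine them through the PSMSE decomposition in (\ref{PSMSE}). The key structural fact I would exploit is that the sets $\mathcal{A}_m$ partition $\Omega_\xvec$ and do \emph{not} depend on $\thetavec$; hence, for each fixed $m$, the conditional density $f(\xvec|\Psi=m;\thetavec)$ is a genuine probability density on the fixed domain $\mathcal{A}_m$, and differentiating under the integral over $\mathcal{A}_m$ is justified by regularity condition \ref{cond2}.

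First I would differentiate the $\Psi$-unbiasedness constraint (\ref{10}), written as $\int_{\mathcal{A}_m}(\hat{\theta}_m-\theta_m)f(\xvec|\Psi=m;\thetavec)\,{\ud}\xvec=0$. Applying $\nabla_\thetavecsmall$ and invoking (\ref{cond2}) with $g(\xvec,\thetavec)=\hat{\theta}_m-\theta_m$ makes the right-hand side vanish. On the left, the derivative splits into two pieces: the term in which $\nabla_\thetavecsmall$ acts on $\hat{\theta}_m-\theta_m$ contributes $-\evec_m$ (because $\hat{\theta}_m$ does not depend on $\thetavec$ and the conditional density integrates to one over $\mathcal{A}_m$), while the term acting on the density produces the score. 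This yields the key identity
\be
{\rm{E}}_{\thetavecsmall}\left[(\hat{\theta}_m-\theta_m)\nabla_\thetavecsmall \log f(\xvec|\Psi=m;\thetavec)\,\big|\,\Psi=m\right]=\evec_m .
\ee

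Given this identity, for any deterministic $\avec\in\mathbb{R}^M$ the scalar $\avec^T\nabla_\thetavecsmall \log f(\xvec|\Psi=m;\thetavec)$ satisfies ${\rm{E}}_{\thetavecsmall}[(\hat{\theta}_m-\theta_m)\,\avec^T\nabla_\thetavecsmall \log f(\xvec|\Psi=m;\thetavec)\,|\,\Psi=m]=\avec^T\evec_m$. The conditional Cauchy--Schwarz inequality, together with the definition (\ref{JJJdef}) of the PSFIM, then gives $(\avec^T\evec_m)^2\le {\rm{E}}_{\thetavecsmall}[(\hat{\theta}_m-\theta_m)^2|\Psi=m]\,\avec^T\Jmat_m(\thetavec,\Psi)\avec$. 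Choosing $\avec=\Jmat_m^{-1}(\thetavec,\Psi)\evec_m$, which is admissible since $\Jmat_m$ is nonsingular by \ref{cond1}, reduces both $\avec^T\evec_m$ and $\avec^T\Jmat_m\avec$ to $[\Jmat_m^{-1}(\thetavec,\Psi)]_{m,m}$; cancelling this common positive factor produces the component-wise bound (\ref{bound_marginal}). Multiplying each by $\Pr(\Psi=m;\thetavec)$, summing over $m$, and identifying the sum with ${\rm{E}}_{\thetavecsmall}[C^{(\Psi)}(\hat{\thetavec},\thetavec)]$ via (\ref{PSMSE}) gives (\ref{bound1})--(\ref{bound2}). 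Equality in Cauchy--Schwarz, hence in (\ref{bound_marginal}), holds iff $\hat{\theta}_m-\theta_m$ is a.s. proportional on $\mathcal{A}_m$ to $\avec^T\nabla_\thetavecsmall\log f(\xvec|\Psi=m;\thetavec)$ with $\avec=\Jmat_m^{-1}\evec_m$, the proportionality factor $h_m(\thetavec)$ depending on $\thetavec$ but not on $\xvec$; writing this out reproduces (\ref{eq_cond}), and the joint bound (\ref{bound1}) is attained iff this holds for every $m$ with $\Pr(\Psi=m;\thetavec)\neq0$.

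The hard part will be the differentiation step, specifically the clean emergence of the $-\evec_m$ term and the rigorous justification of the interchange in (\ref{cond2}). A point worth verifying is that the score of the conditional density carries the extra term $-\nabla_\thetavecsmall\log\Pr(\Psi=m;\thetavec)$ coming from the normalization in (\ref{Bayes}); this term is deterministic given $\Psi=m$ and multiplies ${\rm{E}}_{\thetavecsmall}[(\hat{\theta}_m-\theta_m)|\Psi=m]=0$, so by $\Psi$-unbiasedness it drops out and does not corrupt the identity. Everything else is the standard Cauchy--Schwarz/CRB machinery applied region by region.
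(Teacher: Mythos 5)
Your proposal is correct and follows essentially the same route as the paper: the key identity ${\rm{E}}_{\thetavecsmall}[(\hat{\theta}_m-\theta_m)\nabla_\thetavecsmall \log f(\xvec|\Psi=m;\thetavec)\,|\,\Psi=m]=\evec_m$ obtained by differentiating the $\Psi$-unbiasedness constraint under Condition C.2 is exactly the paper's computation $a_{l,m}(\thetavec)=\delta_{l,m}$, and your choice $\avec=\Jmat_m^{-1}(\thetavec,\Psi)\evec_m$ in the conditional Cauchy--Schwarz inequality is precisely the paper's test function $d(\xvec,\thetavec)=\sum_{l}\frac{\partial \log f(\xvec|\Psi=m;\thetavec)}{\partial \theta_l}\left[\Jmat_m^{-1}(\thetavec,\Psi)\right]_{l,m}$, with the same per-region bounds, the same weighted recombination into (\ref{bound1}), and the same equality characterization (\ref{eq_cond}). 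The only difference is presentational (general $\avec$ then optimized, versus hard-coding the optimal weighting), so no further comparison is needed.
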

\begin{proof}
 According to the Cauchy-Schwarz inequality:
\be
\label{CS}
{\rm{E}}_{\thetavecsmall}
\left[\left.\eta^2(\xvec,\thetavec)\right|\Psi=m\right]
\geq  \frac{{\rm{E}}_{\thetavecsmall}^2\left[\left.\eta(\xvec,\thetavec)d(\xvec,\thetavec)
\right|\Psi=m\right]}
{{\rm{E}}_{\thetavecsmall}\left[\left.d^2(\xvec,\thetavec)\right|\Psi=m\right]},
\ee
for any  measurable  functions $\eta(\xvec,\thetavec)$ and $h(\xvec,\thetavec)$  with finite second moments.
By substituting $\eta(\xvec,\thetavec)=\hat{\theta}_m-\theta_m$ and
 \[d(\xvec,\thetavec)=\sum_{l=1}^M \frac{\partial \log f(\xvec|\Psi=m;\thetavec)}{\partial \theta_l}\left[\Jmat_m^{-1}(\thetavec,\Psi)\right]_{l,m}\] in (\ref{CS})
 and under Condition \ref{cond1},  one obtains
 \beqna
 \label{CS2}
 {\rm{E}}_{\thetavecsmall}
\left[\left.(\hat{\theta}_m-\theta_m)^2\right|\Psi=m\right]
\geq \hspace{3.75cm}\nonumber\\ \frac{  
 \left(\sum\limits_{l=1}^M a_{l,m}(\thetavec)
\left[\Jmat_m^{-1}(\thetavec,\Psi)\right]_{l,m}
\right)^2}
{\sum\limits_{l=1}^M\sum\limits_{k=1}^M  \left[\Jmat_m(\thetavec,\Psi)\right]_{k,l}
\left[\Jmat_m^{-1}(\thetavec,\Psi)\right]_{l,m}
\left[\Jmat_m^{-1}(\thetavec,\Psi)\right]_{m,k}},
\eeqna
for any estimator with ${\rm{E}}_{\thetavecsmall}\left[\left.(\hat{\theta}_m-\theta_m)^2\right|\Psi=m\right]<\infty$
and a nonsingular PSFIM,
where
\[a_{l,m}(\thetavec)\define
{\rm{E}}_{\thetavecsmall}\left[\left.\frac{\partial \log f(\xvec|\Psi=m;\thetavec)}{\partial \theta_l}
(\hat{\theta}_m-\theta_m)\right|\Psi=m\right],
\]
for all $ m,l=1,\ldots,M$.
According to the Cauchy-Schwarz conditions, the equality in (\ref{CS2})
holds {\em{iff}} (\ref{eq_cond}) is satisfied for $m\in\{1,\ldots,M\}$.
By using
integration by parts and assuming   Condition \ref{cond2}, it can be verified that
\beqna
\label{by_parts}
a_{l,m}(\thetavec)=
\frac{\partial}{\partial \theta_l}{\rm{E}}_{\thetavecsmall}\left[\left.(\hat{\theta}_m-\theta_{m})
\right|\Psi=m\right]
+\delta_{l,m}
=\delta_{l,m},
\eeqna
for all $m,l=1,\ldots,M$,
where the last equality is obtained 
by using the   $\Psi$-unbiasedness  conditions from (\ref{10}).
In addition, it can be verified that
\beqna
\label{Jthree}
{\sum\limits_{l=1}^M\sum\limits_{k=1}^M  \left[\Jmat_m(\thetavec,\Psi)\right]_{k,l}
\left[\Jmat_m^{-1}(\thetavec,\Psi)\right]_{l,m}
\left[\Jmat_m^{-1}(\thetavec,\Psi)\right]_{m,k}}
\nonumber\\
=\left[\Jmat_m^{-1}(\thetavec,\Psi)\right]_{m,m}.\hspace{3cm}
\eeqna
By substituting    (\ref{by_parts}) and (\ref{Jthree})
 in (\ref{CS2}), we obtain
  the component-wise $\Psi$-CRBs on the PSMSE in (\ref{bound_marginal}).
Then, by  multiplying  (\ref{bound_marginal}) by $\Pr(\Psi=m;\thetavec)$ and  taking the sum of over   $m=1,\ldots,M$, we obtain 
 the $\Psi$-CRB  in (\ref{bound1}).
Furthermore, the equality condition in (\ref{eq_cond}) stems from the equality conditions of (\ref{CS2}).
\end{proof}

The following Lemma presents two alternative formulations of the PSFIM that are based on
the  (unconditional) likelihood function and the probability of selection, instead of the conditional likelihood used in (\ref{JJJdef}).
These formulations can be more tractable for further estimation and sampling procedures.
\begin{lemma}
\label{lemma1}
Assuming that Conditions \ref{cond1}-\ref{cond2} are satisfied,
 the second derivative w.r.t.  $\thetavec$ of
 $f(\xvec|\Psi=m ;\thetavec)$ exists and is bounded and continuous  
$\forall \xvec\in{\mathcal{A}}_{m}$,
and
the integral
 $\int_{{\mathcal{A}}_{m}} f(\xvec|\Psi=m ;\thetavec){\ud}\xvec$
 is twice differentiable under the integral
sign for all $ m=1,\ldots,M$, $ \thetavec\in {\mathbb{R}}^M$. 
 Then, the $m$th  PSFIM in (\ref{JJJdef})
satisfies
\beqna
\label{JJJdef1.5}
\Jmat_m(\thetavec,\Psi)=  {\rm{E}}_{\thetavecsmall}\left[\left.
\nabla_\thetavecsmall  \log f(\xvec;\thetavec)\nabla_\thetavecsmall^T
 \log f(\xvec;\thetavec) \right|\Psi=m
\right]
\nonumber\\-
\nabla_\thetavecsmall \log \Pr(\Psi=m;\thetavec)
\nabla_\thetavecsmall^T \log \Pr(\Psi=m;\thetavec)
\eeqna
and
\beqna
\label{JJJdef2}
\Jmat_m(\thetavec,\Psi)=
 &-&{\rm{E}}_{\thetavecsmall}\left[\left.
\nabla_\thetavecsmall^2 \log f(\xvec;\thetavec) \right|\Psi=m\right]
\nonumber\\
&+&\nabla_\thetavecsmall^2 \log\Pr(\Psi=m;\thetavec)
,
\eeqna
for all $ m=1,\ldots,M$, $ \thetavec\in {\mathbb{R}}^M$, and for any selection rule $\Psi$.
\end{lemma}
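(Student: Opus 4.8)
The plan is to exploit Bayes' rule (\ref{Bayes}) in logarithmic form, which splits the conditional log-likelihood into an unconditional part and a deterministic selection-probability part. Taking logarithms in (\ref{Bayes}) gives $\log f(\xvec|\Psi=m;\thetavec)=\log f(\xvec;\thetavec)-\log\Pr(\Psi=m;\thetavec)$ for $\xvec\in{\mathcal{A}}_m$, where the second term does not depend on $\xvec$. Differentiating this identity once and twice w.r.t. $\thetavec$ transfers the additive structure to the score and the Hessian, and both claimed formulas then follow by substituting these expressions into the two equivalent representations of the PSFIM.

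For (\ref{JJJdef1.5}), I would first differentiate the normalization $\int_{{\mathcal{A}}_m} f(\xvec|\Psi=m;\thetavec)\ud\xvec=1$ under the integral sign, which is licensed by Condition~\ref{cond2} and, crucially, produces no boundary terms because the partition cell ${\mathcal{A}}_m$ is fixed and does not move with $\thetavec$. This yields the zero-mean property of the conditional score, ${\rm{E}}_{\thetavecsmall}[\nabla_\thetavecsmall\log f(\xvec|\Psi=m;\thetavec)|\Psi=m]=\zerovec$, and combined with the gradient of the Bayes identity it gives ${\rm{E}}_{\thetavecsmall}[\nabla_\thetavecsmall\log f(\xvec;\thetavec)|\Psi=m]=\nabla_\thetavecsmall\log\Pr(\Psi=m;\thetavec)$. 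Writing $\svec\define\nabla_\thetavecsmall\log f(\xvec;\thetavec)$ and the deterministic vector $\pvec\define\nabla_\thetavecsmall\log\Pr(\Psi=m;\thetavec)$, the conditional score equals $\svec-\pvec$; expanding the outer product in (\ref{JJJdef}) as ${\rm{E}}_{\thetavecsmall}[(\svec-\pvec)(\svec-\pvec)^T|\Psi=m]$ and using the zero-mean identity ${\rm{E}}_{\thetavecsmall}[\svec|\Psi=m]=\pvec$, the cross terms and the deterministic quadratic term combine to exactly $-\pvec\pvec^T$, leaving (\ref{JJJdef1.5}).

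For (\ref{JJJdef2}), I would invoke the standard equivalence between the two forms of Fisher information, applied to the conditional density $p\define f(\xvec|\Psi=m;\thetavec)$. Differentiating $\int_{{\mathcal{A}}_m} p\,\ud\xvec=1$ twice under the integral sign---precisely what the extra hypotheses of the lemma (bounded, continuous second derivative; twice differentiability under the integral) are there to justify---gives $\int_{{\mathcal{A}}_m}\nabla_\thetavecsmall^2 p\,\ud\xvec=\zerovec$. Using the algebraic identity $\nabla_\thetavecsmall^2\log p=\nabla_\thetavecsmall^2 p/p-(\nabla_\thetavecsmall\log p)(\nabla_\thetavecsmall^T\log p)$, taking the conditional expectation, and noting that the first term integrates to zero yields $\Jmat_m(\thetavec,\Psi)=-{\rm{E}}_{\thetavecsmall}[\nabla_\thetavecsmall^2\log f(\xvec|\Psi=m;\thetavec)|\Psi=m]$. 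Substituting the Hessian of the Bayes identity, $\nabla_\thetavecsmall^2\log f(\xvec|\Psi=m;\thetavec)=\nabla_\thetavecsmall^2\log f(\xvec;\thetavec)-\nabla_\thetavecsmall^2\log\Pr(\Psi=m;\thetavec)$, and passing the deterministic selection-probability Hessian through the conditional expectation then produces (\ref{JJJdef2}).

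The only delicate point throughout is justifying the interchange of differentiation and integration over ${\mathcal{A}}_m$. Because the selection region is a fixed partition cell rather than a $\thetavec$-dependent domain, no Leibniz boundary contributions appear, so the first-order interchange reduces to Condition~\ref{cond2} and the second-order interchange is supplied by the additional smoothness and integrability hypotheses of the lemma. Once these are in place, the remainder is routine bookkeeping with the additive log-Bayes decomposition and the zero-mean conditional-score identity.
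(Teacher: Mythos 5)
Your proposal is correct and follows essentially the same route as the paper's Appendix B: the additive log-Bayes decomposition of the conditional score, the identity ${\rm{E}}_{\thetavecsmall}\left[\left.\nabla_\thetavecsmall \log f(\xvec;\thetavec)\right|\Psi=m\right]=\nabla_\thetavecsmall \log \Pr(\Psi=m;\thetavec)$ obtained by differentiating over the fixed region ${\mathcal{A}}_{m}$ under Condition \ref{cond2}, and the standard outer-product/Hessian equivalence of Fisher information for the conditional density. The only cosmetic differences are that the paper derives the key identity by differentiating $\Pr(\Psi=m;\thetavec)=\int_{{\mathcal{A}}_{m}}f(\xvec;\thetavec)\ud\xvec$ directly (rather than via the zero-mean conditional score) and establishes $\Jmat_m(\thetavec,\Psi)=-{\rm{E}}_{\thetavecsmall}\left[\left.\nabla_\thetavecsmall^2\log f(\xvec|\Psi=m;\thetavec)\right|\Psi=m\right]$ by the product rule rather than by your identity $\nabla_\thetavecsmall^2\log p=\nabla_\thetavecsmall^2 p/p-(\nabla_\thetavecsmall\log p)(\nabla_\thetavecsmall^T\log p)$; these are interchangeable computations, not different arguments.
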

{\em{Proof:}} The proof appears  in Appendix B.

Similar to the  $\Psi$-CRB from Theorem \ref{Th1}, 
by using the Cauchy-Schwarz inequality and the $\Psi$-unbiasedness
we can obtain various non-Bayesian bounds on the PSMSE 
These bounds  are
modifications of various non-Bayesian bounds, such as the 
biased CRB and  Barankin-type bounds {\cite{Barankin,Chaumette_Larzabal_2008,TTB}}.


\subsection{Special cases}
\label{examples_sec}
In this section, we demonstrate the proposed $\Psi$-CRB and $\Psi$-unbiasedness for
  different cases.
\subsubsection{Randomized  selection rule}
  The  randomized, coin-flipping selection rule satisfies
$\Pr(\Psi_{\text{rand}}=m;\thetavec)=p_m$,
   for all $m=1,\ldots,M$,
	where $\{p_m\}\in[0,1]^M$
 are  independent of $\xvec$.
Therefore, the $\Psi$-unbiasedness from (\ref{unbias6}) in this case is given by
 \beqna
 \label{unbias6_flip}
{\rm{E}}_{\thetavecsmall}\left[\hat{\theta}_{m}-\theta_m\right]=0,~
~~\forall \thetavec \in{\mathbb{R}}^M,
\eeqna
for all $ m=1,\ldots,M$ with $p_m\neq 0$.
The $\Psi$-unbiasedness in  (\ref{unbias6_flip})
is the classical mean-unbiasedness definition.
In this case, 
the $\Psi$-CRB from (\ref{bound2})  is reduced to 
\beqna
\label{bound1_flip}
B^{(\Psi_{\text{rand}})}(\thetavec)= \sum_{m=1}^M p_m \left[\Bmat(\thetavec)\right]_{m,m},
\eeqna
where $\Bmat(\thetavec)=\Jmat^{-1}\left(\thetavec\right)$
is the conventional CRB.
Thus, 
the proposed $\Psi$-CRB for the randomized selection rule, $\Psi_{\text{rand}}$,
is  equal to a  weighted sum of the diagonal elements of the classical CRB, $\Bmat(\thetavec)$,    for
estimating  $\thetavec$ 
 without a selection stage.
In particular, 
for 
$p_{m}=\delta_{m,m'}$,
 where $\theta_{m'}$ is the desired parameter,
we obtained an  estimation problem  in the presence of  nuisance parameters, \ie
where the selection of the ``parameter of interest" $\theta_{m'}$ is performed in advance.
It is easy to verify that in this case
the $\Psi$-CRB and $\Psi$-unbiasedness are reduced to  their classical, marginal versions.
This result coincide with the literature on non-Bayesian nuisance parameter estimation  (e.g. \cite{Kay_estimation} and \cite{Noam_Messer_2009}).
\subsubsection{Parameter coupling}
For conventional  parameter estimation with a diagonal FIM,
where the FIM is defined as 
\beqna
\Jmat(\thetavec)\define
{\rm{E}}_{\thetavecsmall}\left[
\nabla_\thetavecsmall\log f(\xvec;\thetavec)
\nabla_\thetavecsmall^T \log f(\xvec;\thetavec)
\right],
\eeqna
the unknown parameters
are  decoupled from each other; that is, knowledge of one parameter does
not affect the accuracy in the estimation of the others. 
This situation  occurs, for example, for
classical estimation after selection with independent populations, in which 
$f(\xvec;\thetavec)=\prod_{m=1}^M f_m(\yvec_m;\theta_m)$. 
 However, it should be noted that 
the PSFIMs are not necessarily  diagonal for diagonal FIM cases,
since the selection step may create  dependency and coupling between the parameters over the different populations.
For example, by using the form of the PSFIM in (\ref{JJJdef2}), 
it can be seen   that 
the matrix
$\nabla_\thetavecsmall^2 \log\Pr(\Psi=m;\thetavec)$
may be  a nondiagonal matrix for a data-dependent selection rule.
\subsubsection{Biased $\Psi$-CRB}
Similar to the proof of Theorem \ref{Th1},
it can be shown that under regularity conditions \ref{cond1}-\ref{cond2}
the PSMSE is bounded by the following biased $\Psi$-CRB:
\beqna
\label{bound1_biased}
{\rm{E}}_{\thetavecsmall}\left[C^{(\Psi)}(\hat{\thetavec},\thetavec)\right]
\geq   \sum_{m=1}^M \Pr(\Psi=m;\thetavec)\hspace{2.5cm}
\nonumber\\
\times\left(\nabla_\thetavecsmall b_m(\thetavec)+\evec_m \right)^T\left[\Jmat_m^{-1}(\thetavec,\Psi)\right]\left(\nabla_\thetavecsmall b_m(\thetavec)+\evec_m \right),
\eeqna
for any   $\Psi$-biased estimator, $\hat{\thetavec}$, with the biases
\[
b_m(\thetavec)={\rm{E}}_{\thetavecsmall}\left[\hat{\theta}_m-\theta_m|\Psi=m\right],~m=1,\ldots,M,
\]
and a finite second moment.

\subsection{Estimation after parameter subset selection}
\label{subset}
In many problems,  we are interested
in selecting a {\em{subset}} of parameters 
 and  then, estimating  the parameters of the selected subset  {\cite{Jeyaratnam_Panchapakesan_1982,Gupta_1965}}.
This subset may be of random size,
with/without overlapping between the subspaces.
The selection rule is  $\Psi:\Omega_\xvec\rightarrow \{\Psi_1,\ldots,\Psi_L\}$,
where $\{\Psi_1,\ldots,\Psi_L\}$ is a finite covering  of the set $\{1,\ldots,M\}$, \ie it is a
division of  $\{1,\ldots,M\}$
  as a union of possibly-overlapping  non-empty $L$ subsets, such as the power set.
In this case, the PSSE cost function from (\ref{cost_def}) is replaced by
\[
C^{(\Psi)}(\hat{\thetavec},\thetavec)\define\sum_{m=1}^M (\hat{\theta}_m-\theta_m)^2{\mathbf{1}}_{\{m\in\Psi\}}
\]
and the corresponding  PSMSE  is: 
\beqna
{\rm{E}}_{\thetavecsmall}\left[C^{(\Psi)}(\hat{\thetavec},\thetavec)\right]
&=&\sum_{l=1}^L \Pr(\Psi=\Psi_l;\thetavec)
\nonumber\\&&\times 
\sum_{m=1,m\in\Psi_l}^M  {\rm{E}}_{\thetavecsmall}\left[(\hat{\theta}_m-\theta_m)^2|\Psi=\Psi_l\right].\nonumber
\eeqna
Similar to Proposition \ref{unbiasedness_prop} and Theorem \ref{Th1}, the $\Psi$-unbiasedness 
and $\Psi$-CRB for subset selection are,  respectively, given by
 \beqna
 \label{unbias6_s}
 {\rm{E}}_{\thetavecsmall}\left[\left.\hat{\theta}_m-\theta_m\right|\Psi=\Psi_l
\right]=0,~ \begin{array}{l}\forall m=1,\ldots,M,~m\in\Psi_l\\
\forall l=l=1,\ldots,L
\end{array}
\eeqna
for any $ \thetavec \in{\mathbb{R}}^M$
and
\[
{\rm{E}}_{\thetavecsmall}\left[C^{(\Psi)}(\hat{\thetavec},\thetavec)\right] \geq B^{(\Psi)}(\thetavec),
\]
where
\beqna
\label{B_subset}
B^{(\Psi)}(\thetavec)\define   \sum_{l=1}^L \Pr(\Psi=\Psi_l;\thetavec)\sum_{m=1,m\in\Psi_l}^M
\left[\Jmat_l^{-1}(\thetavec,\Psi)\right]_{m,m}
\eeqna
and
\beqna
\Jmat_l(\thetavec,\Psi)&\define& {\rm{E}}_{\thetavecsmall}\left[
\nabla_\thetavecsmall \log f(\xvec|\Psi=\Psi_l;\thetavec)\right.
\nonumber\\&&
\left.\left.\times
\nabla_\thetavecsmall^T\log f(\xvec|\Psi=\Psi_l;\thetavec) \right| \Psi=\Psi_l
\right].\nonumber
\eeqna
for all $l=1,\ldots,L$  is the PSFIM for this case.

If  the selection rule selects all the $M$ parameters, 
then, the PSMSE is equal to  the MSE and we obtain the conventional parameter estimation problem, mean-unbiasedness, 
and the well known CRB.
Another special case of estimation after parameter subset selection is the classical estimation after  selection model with independent populations, where
the pdf of  each single population is 
a function of
 {\em{multiple}} unknown  parameters. 
 For this nonoverlapped case, we can also obtain a matrix-form of the $\Psi$-CRB from (\ref{B_subset}) by using
the matrix form of  the Cauchy-Schwarz inequality and the vector $\Psi$-unbiasedness from (\ref{unbias6_s}).

\subsection{Estimation after data censoring}
A related problem is the estimation after {\em{data}} censoring,
which is  obtained from the aforementioned 
 model
by assuming
a selection rule
 that 
restricts
the set of observations available for parameter estimation. 
In this case,
 we use the observations only if $\Psi=1$, and  we remove them otherwise.
Similar to the derivation of the   $\Psi$-CRB  in Theorem \ref{Th1},
 the following matrix-form $\Psi$-CRB is obtained for this case:
\beqna
\label{bound_detection}
{\rm{E}}_{\thetavecsmall}\left[(\hat{\thetavec}-\thetavec)
(\hat{\thetavec}-\thetavec)^T|\Psi=1\right]
\geq
\Jmat_{c}^{-1}(\thetavec,\Psi),
\eeqna
where
\beqna
\Jmat_c(\thetavec,\Psi)&\define& {\rm{E}}_{\thetavecsmall}\left[\left.
\nabla_\thetavecsmall \log f(\xvec|\Psi=1;\thetavec)\right.\right.
\nonumber\\&&
\left.\left.\times
\nabla_\thetavecsmall^T \log f(\xvec|\Psi=1;\thetavec) \right|\Psi=1
\right].
\eeqna
The bound in  (\ref{bound_detection})  coincides with the conditional CRB derived in \cite{Chaumette2005}
for estimation after binary detection, \ie when a binary detection step is performed
 before the estimation of the parameters. 
This observation remains valid 
for any non-Bayesian bound on the PSMSE, which can be derived 
by using the Cauchy-Schwarz inequality and the $\Psi$-unbiasedness,
in a similar way to the  derivations in \cite{Chaumette2005}, \cite{TTB}.
However, it should be noted that in estimation after data censoring, the selection rule selects
the {\em{data}},  while in our model the {\em{parameter}} to be estimated has been selected.

\section{Post-selection estimation}
\label{estimation_methods_section}
\subsection{The PSML estimator}
For general parameter estimation, the commonly used ML estimator is defined as
\beqna
\label{ML}
\hat{\thetavec}^{({\mbox{\tiny ML}})}=\arg\max_\thetavecsmall\log f(\xvec;\thetavec).
\eeqna
It is well known that the ML estimator is inappropriate for estimation after parameter selection
since it  does not take into account the 
  prescreening process  \cite{Sarkadi}, \cite{Putter_Rubinstein}.
Inspired by Theorem \ref{Th1}, we define the PSML estimator
as:
\beqna
\label{PSML}
\hat{\thetavec}^{({\mbox{\tiny PSML}})}&=&\arg\max_\thetavecsmall\left\{\sum_{m=1}^M \log f(\xvec|\Psi=m;\thetavec){\mathbf{1}}_{\{\xvec\in {\mathcal{A}}_{m}\}}\right\}
\nonumber\\
&=&\arg\max_\thetavecsmall\left\{\log
f(\xvec;\thetavec)\right.
\nonumber\\
&&
-\left.\sum_{m=1}^M\log\Pr(\Psi=m;\thetavec){\mathbf{1}}_{\{\xvec\in {\mathcal{A}}_{m}\}}\right\}
,
\eeqna
where the last equality is obtained by using
(\ref{Bayes}).
	We propose using the PSML estimator instead of the ML estimator for
	estimation after parameter selection problems.
The PSML estimator can be interpreted as the ``penalized ML estimator" \cite{point_est}, where  the penalty term in this case is 
$-\sum_{m=1}^M\log\Pr(\Psi=m;\thetavec){\mathbf{1}}_{\{\xvec\in {\mathcal{A}}_{m}\}}$. 
However, since the penalty term is not  a probability density w.r.t. $\thetavec$,  (\ref{PSML}) does not have  a Bayesian interpretation.  
It can be seen that if the selection probability, $\Pr(\Psi=m;\thetavec)$, is not a function  of 
$\thetavec$, then the PSML estimator coincides with the ML estimator.
This situation occurs, for example, for a randomized selection rule and for estimation in the presence of nuisance parameters.
Under suitable regularity conditions, such as differentiability,  the  PSML
estimator is a solution to the following score equation
\be
\label{PSML2}
\sum_{m=1}^M \nabla_\thetavecsmall \log f(\xvec|\Psi=m;\thetavec)
{\mathbf{1}}_{\{\xvec\in {\mathcal{A}}_{m}\}}=\zerovec.
\ee

The $\Psi$-efficient estimator is defined as follows. 
\begin{definition}
\label{efficient_definition}
An  estimator 
  is said to be an $\Psi$-efficient estimator of $\thetavec$ 
 if it
is  an  $\Psi$-unbiased estimator that achieves the $\Psi$-CRB.
\end{definition}

It should be noticed that  
the requirement for equality condition in (\ref{eq_cond}), \ie for the
$\Psi$-CRB achievability,
 is 
relevant only 
in the subspace ${\mathcal{A}}_{m}$
and
the  estimation errors outside this region can have arbitrary values.
Thus, the estimator which satisfies the equality condition in (\ref{eq_cond}), if exists, is not unique, since    by changing
this  estimator outside the subset $ {\mathcal{A}}_{m}$
we obtain a new estimator  that attained  (\ref{eq_cond}).
In particular, the $\Psi$-efficient estimator is not unique.
The following theorem  describes the relations between the PSML and the $\Psi$-efficient estimators.
\begin{theorem}
\label{Th3}
Assume that
the  regularity conditions \ref{cond1}-\ref{cond2}  are satisfied and that
  $\hat{\thetavec}^{(\Psi{\text{-eff}})}$ is   an    $\Psi$-efficient estimator, as defined in Definition
	\ref{efficient_definition}.
Then, 
\be
\label{eq_cond5}
\hat{\theta}_m^{(\Psi{\text{-eff}})}=\hat{\theta}_m^{({\mbox{\tiny PSML}})},
~~~\forall m=1,\ldots,M,   ~\forall\xvec\in{\mathcal{A}}_{m}~a.s..
\ee 
\end{theorem}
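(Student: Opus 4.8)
The plan is to substitute the PSML estimate directly into the efficiency equality condition (\ref{eq_cond}) and exploit the PSML score equation to annihilate the left-hand side. First I would record the two facts I need. Since $\hat{\thetavec}^{(\Psi{\text{-eff}})}$ is $\Psi$-efficient, it attains the $\Psi$-CRB for \emph{every} value of the parameter, so by the equality part of Theorem \ref{Th1} the identity (\ref{eq_cond}) holds for all $\thetavec\in{\mathbb{R}}^M$ and a.s. $\xvec\in{\mathcal{A}}_{m}$; in particular, for almost every fixed $\xvec\in{\mathcal{A}}_{m}$ it is an identity in the free variable $\thetavec$, while $\hat{\theta}_m^{(\Psi{\text{-eff}})}$ is a fixed function of $\xvec$ that does not vary with $\thetavec$. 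On the other side, because the sets ${\mathcal{A}}_{m}$ partition $\Omega_\xvec$, the indicator ${\mathbf{1}}_{\{\xvec\in{\mathcal{A}}_{m'}\}}$ equals $\delta_{m,m'}$ whenever $\xvec\in{\mathcal{A}}_{m}$, so the PSML score equation (\ref{PSML2}) collapses on ${\mathcal{A}}_{m}$ to the single vector equation $\nabla_\thetavecsmall \log f(\xvec|\Psi=m;\thetavec)\big|_{\thetavec=\hat{\thetavec}^{({\mbox{\tiny PSML}})}}=\zerovec$.

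Next I would evaluate (\ref{eq_cond}) at the data-dependent point $\thetavec=\hat{\thetavec}^{({\mbox{\tiny PSML}})}(\xvec)$. The left-hand side of (\ref{eq_cond}) is the linear combination $\sum_{l=1}^M \frac{\partial \log f(\xvec|\Psi=m;\thetavec)}{\partial \theta_l}[\Jmat_m^{-1}(\thetavec,\Psi)]_{l,m}$ of the components of $\nabla_\thetavecsmall \log f(\xvec|\Psi=m;\thetavec)$; at the PSML point every such component vanishes, so the whole left-hand side is zero. The right-hand side becomes $h_m(\hat{\thetavec}^{({\mbox{\tiny PSML}})})\big(\hat{\theta}_m^{(\Psi{\text{-eff}})}-\hat{\theta}_m^{({\mbox{\tiny PSML}})}\big)$, since setting the $m$th coordinate $\theta_m=\hat{\theta}_m^{({\mbox{\tiny PSML}})}$ turns the factor $(\hat{\theta}_m^{(\Psi{\text{-eff}})}-\theta_m)$ into $\hat{\theta}_m^{(\Psi{\text{-eff}})}-\hat{\theta}_m^{({\mbox{\tiny PSML}})}$. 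Equating the two sides yields $h_m(\hat{\thetavec}^{({\mbox{\tiny PSML}})})\big(\hat{\theta}_m^{(\Psi{\text{-eff}})}-\hat{\theta}_m^{({\mbox{\tiny PSML}})}\big)=0$, from which (\ref{eq_cond5}) follows at once provided $h_m\neq 0$.

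The only genuinely delicate point, which I expect to be the main obstacle, is showing that $h_m(\thetavec)\neq 0$ so that the final factor may be cancelled. I would argue this from Condition \ref{cond1}: writing $d(\xvec,\thetavec)$ for the left-hand side of (\ref{eq_cond}), the computation (\ref{Jthree}) gives ${\rm{E}}_{\thetavecsmall}[d^2(\xvec,\thetavec)|\Psi=m]=[\Jmat_m^{-1}(\thetavec,\Psi)]_{m,m}>0$ by nonsingularity of the PSFIM, so $d(\xvec,\thetavec)$ is not zero a.s. on ${\mathcal{A}}_{m}$; were $h_m(\thetavec)=0$, the equality condition (\ref{eq_cond}) would force $d(\xvec,\thetavec)=0$ a.s. on ${\mathcal{A}}_{m}$, a contradiction. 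Hence $h_m(\thetavec)\neq 0$ for every $\thetavec$, in particular at $\hat{\thetavec}^{({\mbox{\tiny PSML}})}$. A secondary technical issue is the legitimacy of inserting the random argument $\hat{\thetavec}^{({\mbox{\tiny PSML}})}(\xvec)$ into an identity stated ``for all $\thetavec$ and a.e.\ $\xvec$''; this is handled by fixing $\xvec$ outside the exceptional null set and then invoking that (\ref{eq_cond}) holds at \emph{every} $\thetavec$, hence at the particular value $\hat{\thetavec}^{({\mbox{\tiny PSML}})}(\xvec)$, which establishes (\ref{eq_cond5}) a.s.\ on each ${\mathcal{A}}_{m}$.
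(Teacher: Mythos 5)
Your proposal is correct and follows essentially the same route as the paper: substitute the PSML point into the efficiency equality condition (\ref{eq_cond}) and use the score equation (\ref{PSML2}) to annihilate the score terms, which forces $\hat{\theta}_m^{(\Psi{\text{-eff}})}=\hat{\theta}_m^{({\mbox{\tiny PSML}})}$ on ${\mathcal{A}}_{m}$. The only difference is cosmetic---the paper divides (\ref{eq_cond}) by $h_m(\thetavec)$ first and then substitutes $\thetavec=\hat{\thetavec}^{({\mbox{\tiny PSML}})}$, whereas you substitute first and then cancel $h_m$; your explicit check that $h_m(\thetavec)\neq 0$ (via (\ref{Jthree}) and nonsingularity of the PSFIM) supplies a step the paper's division by $h_m(\thetavec)$ leaves implicit.
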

\begin{proof}
According to Definition 	\ref{efficient_definition},
$\hat{\thetavec}^{(\Psi{\text{-eff}})}$
is  an  $\Psi$-unbiased estimator that achieves the $\Psi$-CRB.
According to (\ref{eq_cond}), the estimator that achieves the $\Psi$-CRB satisfies
\beqna
\label{eq_cond_estimator}
\hat{\theta}_m^{(\Psi{\text{-eff}})}=\hspace{6.5cm}\nonumber\\
\theta_m+
\frac{1}{h_m(\thetavec)}\sum_{l=1}^M \frac{\partial \log f(\xvec|\Psi=m;\thetavec)}{\partial \theta_l}
\frac{\left[\Jmat_m^{-1}(\thetavec,\Psi)\right]_{l,m}}{h_m(\thetavec)},
\eeqna
a.s. $\forall\xvec\in{\mathcal{A}}_{m}$,
for all $ m=1,\ldots,M$ and $ \thetavec\in \Omega_\thetavecsmall$.
By using  (\ref{PSML2}), it can be concluded  that
\be
\label{28}
\left.\frac{\partial \log f(\xvec|\Psi=m;\thetavec)}{\partial\theta_l}\right|_{\thetavecsmall=\hat{\thetavecsmall}^{({\mbox{\tiny PSML}})}}=0,~\forall l=1,\ldots,M,
\ee for $\xvec\in{\mathcal{A}}_{m}$.
Therefore, 
by substituting $\thetavec=\hat{\thetavec}^{({\mbox{\tiny PSML}})}$  and (\ref{28}) in (\ref{eq_cond_estimator}),
one obtains 
\[
\hat{\theta}_m^{(\Psi{\text{-eff}})}=
\hat{\theta}_m^{({\mbox{\tiny PSML}})},~~~\forall m=1,\ldots,M,~\xvec\in{\mathcal{A}}_{m}.
\]
 Thus,
(\ref{eq_cond5}) is satisfied.
 \end{proof}
It should be noted that (\ref{eq_cond5}) implies that 
for any observation vector $\xvec\in{\mathcal{A}}_{m}$, the $m$th elements of the $\Psi$-efficient estimator  and the PSML estimator are identical, while the other elements  may be different.
However, these elements have no influence on the PSMSE.

\subsection{Practical implementations of the PSML}
\label{itr_section}
In practice,  an analytical expression for the PSML in (\ref{PSML2}) is usually unavailable due to the intractability of the probability of selection.
 In the following, we propose three iterative methods  for the implementation of the PSML:
  1) Newton-Raphson,  2) post-selection Fisher scoring, and
	3) maximization by parts (MBP).
	These methods are based on the assumptions that the post-selection likelihood, $f(\xvec|\Psi=m;\thetavec)$, is a twice continuously differentiable function  w.r.t. $\thetavec$ for any $m=1,\ldots,M$, and that 
	 a unique solution to the score equation in 
	(\ref{PSML2}) exists, which is the PSML estimator, $\hat{\thetavec}^{({\mbox{\tiny PSML}})}$.

\subsubsection{Newton-Raphson}
The  Newton-Raphson method for solving the  post-selection likelihood equation in (\ref{PSML2})
is based on replacing the objective function on the r.h.s. of (\ref{PSML})
 by its first-order Taylor expansion (see, e.g. Chapter 7 of  \cite{Kay_estimation}).
Therefore, 
 the $i$th iteration of the Newton-Raphson method is given by:
\beqna
\label{iterative_NR}
\hat{\thetavec}^{(i+1)}=\hat{\thetavec}^{(i)}-\sum_{m=1}^M\Hmat_m^{-1}\left(\hat{\thetavec}^{(i)},\Psi\right)\times\hspace{2.4cm}\nonumber\\
\left. \nabla_\thetavecsmall \log f(\xvec|\Psi=m;\thetavec)
\right|_{\thetavecsmall=\hat{\thetavecsmall}^{(i)}}{\mathbf{1}}_{\{\xvec\in {\mathcal{A}}_{m}\}},
\eeqna
$\forall i=1,2,\ldots$, where the  Hessian matrix is given by
\beqna
\label{Hmat}
\Hmat_m(\thetavec,\Psi)\define
\nabla_\thetavecsmall^2\log f(\xvec|\Psi=m;\thetavec)
,~~~\forall \thetavec\in{\mathbb{R}}.
\eeqna

\subsubsection{Post-selection Fisher scoring}
Similar to the derivation of Fisher scoring for ML estimation  \cite{Kay_estimation},
a variation on (\ref{iterative_NR})  is the Fisher scoring method in which the post-selection Hessian is replaced  by its expected value, $-\Jmat_m(\thetavec,\Psi)$.
Thus, 
the $i$th iteration of the resulting
post-selection Fisher scoring
procedure  is given by:
\beqna
\label{fisher_scoring}
\hat{\thetavec}^{(i+1)}=\hat{\thetavec}^{(i)}+\sum_{m=1}^M
\Jmat_m^{-1}\left(\hat{\thetavec}^{(i)},\Psi \right)\times\hspace{2.4cm}\nonumber\\
\left.\nabla_\thetavecsmall \log f(\xvec|\Psi=m;\thetavec)
\right|_{\thetavecsmall=\hat{\thetavecsmall}^{(i)}}{\mathbf{1}}_{\{\xvec\in {\mathcal{A}}_{m}\}},
\eeqna
for $ i=1,2,\ldots$, where the PSFIM  is defined in (\ref{JJJdef}).

\subsubsection{MBP}
\label{max_by_part}
In some instances, the second derivative of the  post-selection likelihood function is  intractable, so that calculation of $\Hmat_m^{-1}\left(\thetavec,\Psi \right)$ and $\Jmat_m^{-1}\left(\thetavec,\Psi \right)$ 
 may be difficult. Thus, the Newton-Raphson and  post-selection Fisher scoring methods are intractable.
In \cite{Song_Fan_Kalbfleisch_2005}, an MBP algorithm is proposed   that strategically selects a part of the full likelihood function with easily computed second-order derivatives.  The remaining more difficult part of the likelihood function participates in the algorithm in such a way that its second-order derivative is not needed. 
If the “information dominance condition”  \cite{Song_Fan_Kalbfleisch_2005} is satisfied, then the MBP estimator converges to the MSPL estimator and its asymptotic performance is better than that  of the ML estimator.

In the context of  the estimation after parameter selection model, according to the r.h.s. of (\ref{PSML}), 
 the PSML consists of maximizing the sum of two functions:
$\log
f(\xvec;\thetavec)$ and
$-\sum_{m=1}^M\log\Pr(\Psi=m;\thetavec){\mathbf{1}}_{\{\xvec\in {\mathcal{A}}_{m}\}}$.
Maximizing the  probability of selection is usually  less tractable than minimizing the likelihood function and
may create  dependency and coupling between the different parameters.
Thus, we use the MBP method \cite{Song_Fan_Kalbfleisch_2005}, such that
the $i$th iteration   is given by
\beqna
\label{by_parts_max}
&&\left.\nabla_\thetavecsmall \log f(\xvec;\thetavec)\right|_{\thetavecsmall=\hat{\thetavecsmall}^{(i+1)}}=\nonumber\\&&
\sum_{m=1}^M\left. \nabla_\thetavecsmall \log\Pr(\Psi=m;\thetavec)\right|_{\thetavecsmall=\hat{\thetavecsmall}^{(i)}}{\mathbf{1}}_{\{\xvec\in {\mathcal{A}}_{m}\}}
\eeqna
and the initial estimate is the ML estimator, \ie $\hat{\thetavec}^{(0)}=\hat{\thetavec}^{({\mbox{\tiny ML}})}$.
The advantage of 
the estimation iteration in (\ref{by_parts_max}) is that there is no need for  a second derivative of the probability of selection.

Asymptotically as $N\rightarrow \infty$, the MBP iteration converges
to the PSML estimator
under the   ``information dominance condition"  \cite{Song_Fan_Kalbfleisch_2005}, \ie
if 
\[\left|\left| \Jmat^{-1}(\thetavec) \nabla_\thetavecsmall  \log \Pr(\Psi=m;\thetavec)
\nabla_\thetavecsmall^T \log \Pr(\Psi=m;\thetavec) \right|\right|<1,
\]
and  the Fisher information 
is larger than the information contained in the probability of selection,
where  $||\cdot||$ denotes the  spectral norm.

Additional relaxation can be achieved 
by  using the  Newton-Raphson method on the l.h.s. of
(\ref{by_parts_max}), \ie
by \cite{Liao_Qaqish_2005}:
\beqna
\label{iterative_NR2}
\hat{\thetavec}^{(i+1)}=\hat{\thetavec}^{(i)}-
\Hmat^{-1}\left(\hat{\thetavec}^{(i)}\right)\times\hspace{3cm}\nonumber\\\sum_{m=1}^M 
\left. \nabla_\thetavecsmall \log f(\xvec|\Psi=m;\thetavec)
\right|_{\thetavecsmall=\hat{\thetavecsmall}^{(i)}}{\mathbf{1}}_{\{\xvec\in {\mathcal{A}}_{m}\}},
\eeqna
or by using the Fisher scoring variation:
\beqna
\label{fisher_scoring2}
\hat{\thetavec}^{(i+1)}=\hat{\thetavec}^{(i)}+
\Jmat^{-1}\left(\hat{\thetavec}^{(i)} \right)\times\hspace{3cm}\nonumber\\\sum_{m=1}^M 
\left. \nabla_\thetavecsmall \log f(\xvec|\Psi=m;\thetavec)
\right|_{\thetavecsmall=\hat{\thetavecsmall}^{(i)}}{\mathbf{1}}_{\{\xvec\in {\mathcal{A}}_{m}\}},
\eeqna
 where the  Hessian matrix  in this case is given by
$
\Hmat(\thetavec)\define
\nabla_\thetavecsmall^2 \log f(\xvec;\thetavec).
$
One merit of the iterative methods in (\ref{iterative_NR2}) and
(\ref{fisher_scoring2}) is that the MBP  utilizes the conventional Hessian and FIM to direct the search for the PSML.
This is very useful for the classical estimation after selection problem  with independent populations, where the
conventional Hessian and Fisher scoring  are diagonal matrices.
In Appendix C, an iterative method is proposed for cases with intractable probability of selection.



\section{Examples}
\label{simulation_section}
\subsection{Uniform distribution}
Consider the following observation model:
\beqna
\label{model_uniform}
y_m[n] \sim U[0,\theta_m] ,~n=0,\ldots,N-1,~m=1,2,
\eeqna
where $U[a,b]$ denotes the continuous uniform distribution on the support $[a,b]$
and  the two populations are assumed to be independent.
For the selection of the population with the largest maximum,
the 
 SMS rule, $\Psi_{{\text{SMS}}}$, selects the population with the largest sufficient statistics,   \ie 
\be
\label{SMS_uniform}
\Psi_{{\text{SMS}}}=\arg \max_{m=1,2}\left\{\hat{\theta}_m^{({\mbox{\tiny ML}})}\right\},
\ee
where the ML estimator of $\theta_m$ is given by
\be
\label{ML_est_uniform}
\hat{\theta}_m^{({\mbox{\tiny ML}})}\define \max_{n=0,\ldots,N-1}\left\{ y_{m}[n]\right\},~~~m=1,2.
\ee
The uniform minimum variance unbiased (MVU)  estimator  (in the conventional sense)
for this problem and without a selection stage,
is given by (e.g. \cite{Kay_estimation}, p. 115)
\be
\label{MVU_est_uniform}
\hat{\theta}_m^{({\mbox{\tiny MVU}})}\define \frac{N+1}{N}\hat{\theta}_m^{({\mbox{\tiny ML}})},~m=1,2.
\ee
While the ML and MVU are $\Psi_{{\text{SMS}}}$-biased estimators for this case,
 it is shown {\em{analytically}} in \cite{song_uniform} that the U-V estimator,
\[\hat{\theta}_m^{({\mbox{\tiny U-V}})}=\hat{\theta}_m^{({\mbox{\tiny MVU}})}-\frac{1}{N+1}
\frac{\left(\hat{\theta}_k^{({\mbox{\tiny MVU}})}\right)^N}{\left(\hat{\theta}_m^{({\mbox{\tiny MVU}})}\right)^{N-1}}, 
\]
for $m,k=1,2$ and $m\neq k$,
satisfies
\beqna
{\rm{E}}_\thetavecsmall\left[\left(\hat{\theta}_m^{({\mbox{\tiny U-V}})}-{\theta}_m\right){\mathbf{1}}_{\{\Psi_{{\text{SMS}}}=m\}}
\right]
=0.
\eeqna
Thus, according to (\ref{unbias6}), the U-V estimator is an $\Psi_{{\text{SMS}}}$-unbiased estimator.
 Surprisingly, 
this
estimator is a function of the sufficient statistics of the two independent populations. 
In this case, the regularity conditions of  the likelihood function are not satisfied (e.g. \cite{Kay_estimation}, pp. 113-116); thus,
the   proposed $\Psi$-CRB for any selection rule $\Psi$, as well as the classical CRB itself, 
 do not exist.

The  PSMSE  of  the ML,  MVU, and U-V estimators 
 with the SMS  rule is
evaluated using $250,000$ Monte-Carlo simulations and presented in Fig.  \ref{MSSE_uniform}, 
for
$\theta_1=10$ and  $\theta_2=10.2$.
It can be seen that 
an  $\Psi_{{\text{SMS}}}$-unbiased estimator exists, $\hat{\thetavec}^{({\mbox{\tiny U-V}})}$, with a lower PSMSE    than the MMSE of the ML and MVU estimators for any number of samples, $N$.
 \begin{figure}[htb]
\vspace{-0.25cm}
\centerline{\psfig{figure=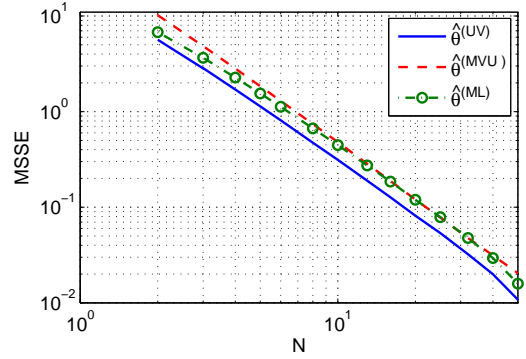,width=7.5cm}}
\caption{The  performance of the ML, MVU, and U-V estimators for estimation after parameter selection
 with independent  uniformly distributed populations and the SMS rule.}
\label{MSSE_uniform}
\vspace{-0.25cm}
\end{figure}

\subsection{Linear Gaussian model}
Consider the following observation model:
\beqna
\label{modelG}
\left\{\begin{array}{l}
y_1[n] = \theta_1 + w_1[n]\\
y_2[n] = \theta_2 +w_2[n]
\end{array}\right.,~~~n=0,\ldots,N-1,
\eeqna
 where  the normally distributed noise vectors, $\wvec[n]=[w_1[n],w_2[n]]^{\mbox{\tiny $T$}}$, $n=0,\ldots,N-1$,  are 
independent in time and space and have a
  zero mean  and  a known covariance matrix,
 \[\bsigma\define \left[\begin{array}{cc}\sigma_1^2&0\\
0 &\sigma_2^2
\end{array}\right]. \]

We assume the
 SMS rule, which selects the population with the largest
 sample mean,  \ie 
\be
\label{SMS_Gaussian}
\Psi_{{\text{SMS}}}=\arg \max_{m=1,2}\left\{\hat{\theta}_m^{({\mbox{\tiny ML}})}\right\},
\ee
where
\be
\label{ML_est}
\hat{\theta}_m^{({\mbox{\tiny ML}})}\define \frac{1}{N}\sum_{n=0}^{N-1} y_{m}[n],
\ee
is the ML estimator of $\theta_m$ for $m=1,2$.
According to (\ref{ML_est}), the  ML estimators are jointly Gaussian random variables with means $\theta_1$, $\theta_2$  and covariance matrix $\frac{1}{N}\bsigma$. Thus, 
for the SMS rule in (\ref{SMS_Gaussian}), the  probability of selecting population $m$ is:
\beqna
\label{prob_Gaussian}
\Pr(\Psi_{{\text{SMS}}}=m;\thetavec)=
\Pr(\hat{\theta}_m^{({\mbox{\tiny ML}})}-\hat{\theta}_k^{({\mbox{\tiny ML}})}>0;\thetavec)
=
 \Phi(\Delta_m)
,
\eeqna
 for $ m,k=1,2$, $m\neq k$,
where   $\Phi(\cdot)$
denotes the standard normal  cumulative distribution function (cdf),
 \[\Delta_m=\frac{\theta_m-\theta_k}{\sigma},~~~ m,k=1,2,~m\neq k,
\]
and
$\sigma^2\define \frac{\sigma_1^2+\sigma_2^2}{N}$.

\subsubsection{The $\Psi_{{\text{SMS}}}$-CRB}
It can be verified that for this case
\be
\label{diff_diff}
 \nabla_\thetavecsmall^2 \log f(\xvec;\thetavec)
=-N \bsigma^{-1}.
\ee
Therefore, by substituting (\ref{diff_diff}) in (\ref{JJJdef2}),
one obtains 
\beqna
\label{JJJdef2_Gaussian}
\Jmat_m(\thetavec,\Psi)=
 N \bsigma^{-1}
+ \nabla_\thetavecsmall^2
\log\Pr(\Psi=m;\thetavec),~m=1,2,
\eeqna
 for the   selection rule $\Psi$.
By substituting (\ref{JJJdef2_Gaussian}) in 
 (\ref{bound1}),
 the proposed $\Psi$-CRB is obtained.
Therefore, by using (\ref{prob_Gaussian}), the chain, and the product rules, it can be verified that
\beqna
\label{25}
 \nabla_\thetavecsmall^2 
\log\Pr(\Psi_{{\text{SMS}}}=m;\thetavec)
 =\frac{c(\Delta_{m}) }{\sigma^2}
\left[\begin{array}{rr}1 & -1
\\-1&1\end{array}\right],
\eeqna
 for all $ m=1,2$,
where
\beqna
\label{comp}
c(\Delta)\define 
-
\frac{\phi(\Delta )}{\Phi(\Delta)}
\Delta
-\frac{\phi^2(\Delta)}
{\Phi^2(\Delta)}
\eeqna
and $\phi(\cdot)$  denotes the standard normal pdf.
By substituting  (\ref{25})
in (\ref{JJJdef2_Gaussian}), 
one obtains
\beqna
\label{JJJdef2_Gaussian_inv}
\Jmat_m^{-1}(\thetavec,\Psi)=
\frac{1}{N}\left(\bsigma
-\frac{c(\Delta_m)} 
{N\sigma^2(c(\Delta_m)+1)}\Dmat\right)
,
\eeqna
where
\beqna
\Dmat\define
 \left[\begin{array}{cc} \sigma_1^4
&-\sigma_1^2\sigma_2^2 \\
-\sigma_1^2\sigma_2^2
 & \sigma_2^4
\end{array}\right].
\eeqna
By  substituting (\ref{JJJdef2_Gaussian_inv}) in
(\ref{bound_marginal}),  we obtain the  $\Psi_{\text{SMS}}$-CRB on the 
component-wise  PSMSE:
\beqna
\label{calculation2}
{\rm{E}}_{\thetavecsmall}\left[(\hat{\theta}_m-\theta_m)^2|\Psi_{\text{SMS}}=m\right]\geq
\frac{\sigma_m^2}{N} \zeta\left(\Delta_{m},\frac{\sigma_m^2 }{\sigma_1^2+\sigma_2^2}\right),
\eeqna
$m,k=1,2$ and $m\neq k$,
where
\beqna
\label{zeta_def}
\zeta\left(\Delta,\kappa\right)\define 1
-\frac{c(\Delta)} 
{c(\Delta)+1}\kappa.
\eeqna
 Finally,
 by substituting (\ref{prob_Gaussian})  and (\ref{calculation2}) 
 in (\ref{bound2}), 
 the  $\Psi_{\text{SMS}}$-CRB
 on the PSMSE is obtained:
\beqna
\label{bound2_Gaussian}
B^{(\Psi_{{\text{SMS}}})}(\thetavec)&=&\frac{\sigma_1^2}{N} \Phi\left(\Delta_1\right)
 \zeta\left(\Delta_1,\frac{\sigma_1^2 }{\sigma_1^2+\sigma_2^2}\right)
\nonumber\\&&
+\frac{\sigma_2^2}{N}\Phi\left(\Delta_2\right)
 \zeta\left(\Delta_2,\frac{\sigma_1^2 }{\sigma_1^2+\sigma_2^2}\right).
\eeqna
Similarly, the biased $\Psi$-CRB
 is obtained by substituting (\ref{JJJdef2_Gaussian_inv})
and  the gradient 
of the ML $\Psi$-biased from (\ref{bias1})-(\ref{bias2}) in (\ref{bound1_biased}).

It is well known that the   conventional CRB, on estimating the $m$th parameter $\theta_m$
without a selection stage,   is given by (e.g.  \cite{Kay_estimation} pp. 31-32)
$\frac{\sigma_m^2}{N}$.
Therefore, the component-wise $\Psi_{\text{SMS}}$-CRB in (\ref{calculation2}) is equal to the conventional marginal CRB  multiplied by a correction factor, $\zeta\left(\Delta,\sigma_1,\sigma_2\right)$, as defined in (\ref{zeta_def}).
The  correction factor
 is presented  in Fig. \ref{factor_Fig} versus $\Delta$ and $\sigma_1^2$ for $N=10$  and $\sigma_2=1$.
It can be seen that for  $\Delta<0$, the correction factor increases as $\Delta$ decreases, 
because this situation occurs when the order-relation  of the sample means is wrong, which is an indication of high estimation error.
Similarly, the correction factor increases as the variance $\sigma_1^2$ increases.
In contrast, 
for $c(\Delta)=0$, which occurs when $\Delta\gg 0$, 
the correction factor satisfies  $\zeta(\Delta,\sigma_1,\sigma_2)\rightarrow 1$ and thus,
 the component-wise $\Psi_{{\text{SMS}}}$-CRB  converges to the   CRB, \ie 
 the selection stage has only  minor influence on the estimation stage.
It should be noted, however, that the  CRB and $\Psi_{{\text{SMS}}}$-CRB are lower bounds on different performance
measures, \ie MSE and PSMSE, and on different groups of  estimators, \ie  mean-unbiased  and
$\Psi_{{\text{SMS}}}$-unbiased estimators, respectively.
\begin{figure}[htb]
\vspace{-0.25cm}
\centerline{\psfig{figure=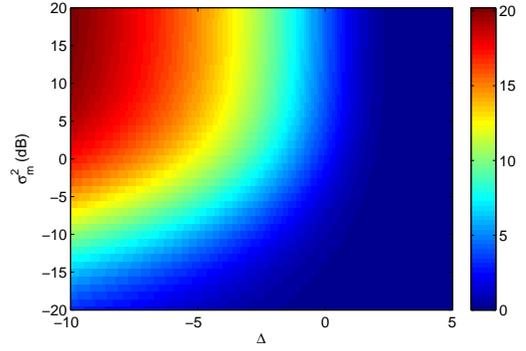,width=7cm}}
\caption{The multiplication factor, $\zeta(\Delta,\sigma_1,\sigma_2)$, for the linear Gaussian model with $N=10$ and $\sigma_2=1$.}
\label{factor_Fig}
\vspace{-0.25cm}
\end{figure}

\subsubsection{Estimation}
In \cite{Cohen_Sackrowitz} it is shown that  there is no  $\Psi_{{\text{SMS}}}$-unbiased estimator of $\theta_1$  and $\theta_2$. It is also shown that the ML estimator satisfies \cite{Cohen_Sackrowitz}
\be
\label{bias1}
{\rm{E}}_\thetavecsmall\left[\left.\hat{\theta}_m^{({\mbox{\tiny ML}})}-\theta_m\right|\Psi_{{\text{SMS}}}=m\right]
=\frac{\sigma_m^2}{N\sigma}\frac{\phi(\Delta_m )}{\Phi(\Delta_m)}
\geq 0
\ee
and
\be
\label{bias2}
{\rm{E}}_\thetavecsmall\left[\left.\hat{\theta}_k^{({\mbox{\tiny ML}})}-\theta_k\right|\Psi_{{\text{SMS}}}=m\right]
=-\frac{\sigma_k^2}{N\sigma}\frac{\phi(\Delta_m )}{\Phi(\Delta_m)}
\leq 0,
\ee
$m,k=1,2$, $m\neq k$. This result indicates that the ML tends to overestimate the parameter of the selected population and to underestimate the unknown parameter of the unselected population.

By using the model in (\ref{modelG}) and the selection probability in (\ref{prob_Gaussian}),
 we obtain
the following post-selection likelihood function for the SMS rule:
\beqna
\label{partial_Gaussian}
 \nabla_\thetavecsmall \log f(\xvec|\Psi_{{\text{SMS}}}=m;\thetavec)
=
N\bsigma^{-1}\left(\hat{\thetavec}^{({\mbox{\tiny ML}})}-\thetavec\right)
\nonumber\\
-\frac{ \phi(\Delta_m)}{ \sigma \Phi(\Delta_m)}
\left[\begin{array}{c}
{\mathbf{1}}_{\{\xvec\in {\mathcal{A}}_{1}\}}-{\mathbf{1}}_{\{\xvec\in {\mathcal{A}}_{2}\}}\\
{\mathbf{1}}_{\{\xvec\in {\mathcal{A}}_{2}\}}-{\mathbf{1}}_{\{\xvec\in {\mathcal{A}}_{1}\}}
\end{array}\right],
\eeqna
where $\hat{\thetavec}^{({\mbox{\tiny ML}})}=[\hat{\theta}_1^{({\mbox{\tiny ML}})},\hat{\theta}_2^{({\mbox{\tiny ML}})}]^{\mbox{\tiny $T$}}$
is defined in (\ref{ML_est}).
According to (\ref{PSML2}),
by equating the r.h.s. of (\ref{partial_Gaussian}) to zero we obtain 
the  PSML estimator for $\xvec\in {\mathcal{A}}_{m}$:
\beqna
\label{PSML_Gaussian}
\hat{\thetavec}^{({\mbox{\tiny PSML}})}=\hat{\thetavec}^{({\mbox{\tiny ML}})}\hspace{5cm}
\nonumber\\
-\frac{1}{ N \sigma } \frac{ \phi(\hat{\Delta}_m^{({\mbox{\tiny PSML}})})}{ \Phi(\hat{\Delta}_m^{({\mbox{\tiny PSML}})})}\left[\begin{array}{c}\sigma_1^2\left(
{\mathbf{1}}_{\{\xvec\in {\mathcal{A}}_{1}\}}-{\mathbf{1}}_{\{\xvec\in {\mathcal{A}}_{2}\}}\right)\\
\sigma_2^2\left({\mathbf{1}}_{\{\xvec\in {\mathcal{A}}_{2}\}}-{\mathbf{1}}_{\{\xvec\in {\mathcal{A}}_{1}\}}\right)
\end{array}\right],
\eeqna
 where
 \[\hat{\Delta}_m^{({\mbox{\tiny PSML}})}=\frac{\hat{\theta}_{m}^{({\mbox{\tiny PSML}})}-\hat{\theta}_{k}^{({\mbox{\tiny PSML}})}}{\sigma},~~~ m,k=1,2,~m\neq k,
\]
for any $\xvec\in {\mathcal{A}}_{m}$.
 It can be seen that
 as  $\sigma^2$ increases the correction term on the r.h.s. of  (\ref{PSML_Gaussian}) 
becomes insignificant and the PSML estimator approaches the ML estimator.

The solution of (\ref{PSML_Gaussian}) can be found by an exhaustive search over
 $\hat{\thetavec}^{({\mbox{\tiny PSML}})}$
or by using the iterative  methods from Section \ref{itr_section}.
It can be verified that the Newton-Raphson and post-selection Fisher scoring coincide in this case,
where the $i$th iteration of the Newton-Raphson PSML (NR-PSML) is obtained
by
 substituting  (\ref{partial_Gaussian}) and 
 (\ref{JJJdef2_Gaussian_inv}) 
in (\ref{fisher_scoring}).
Similarly,
by substituting  (\ref{prob_Gaussian}), (\ref{diff_diff}), and (\ref{25}) in (\ref{by_parts_max}),  the MBP estimator is obtained:
\beqna
\label{g_ML_Gaussian2}
\hat{\thetavec}^{(i+1)}=
\hat{\thetavec}^{({\mbox{\tiny ML}})}\hspace{6cm}\nonumber\\-\frac{\phi(\hat{\Delta}^{(i)})}{N \sigma \Phi(\hat{\Delta}^{(i)})}
\left[\begin{array}{c}\sigma_1^2 \left(
{\mathbf{1}}_{\{\xvec\in {\mathcal{A}}_{1}\}}-{\mathbf{1}}_{\{\xvec\in {\mathcal{A}}_{2}\}}\right)\\
\sigma_2^2 \left({\mathbf{1}}_{\{\xvec\in {\mathcal{A}}_{2}\}}-{\mathbf{1}}_{\{\xvec\in {\mathcal{A}}_{1}\}}\right)
\end{array}\right],
\eeqna
which coincides with the results in \cite{Bebu_Luta_Dragalin_2010} for the Gaussian case.

The bias and PSMSE of  the ML, NR-PSML, and MBP estimators  
 with the SMS  rule are
evaluated using $20,000$ Monte-Carlo simulations, 
and the results are presented in Figs. \ref{bias_Gaussian} and \ref{MSSE_Gaussian}, respectively,
for
$\theta_1=0$, $\theta_2=0.1$, $\sigma_1^2=1$, and $\sigma_2^2=0.1$.
The PSMSE performance is compared to the $\Psi$-CRB from (\ref{bound2_Gaussian})
and the biased $\Psi$-CRB.
It can be seen that the PSML methods have a lower $\Psi$-bias and lower PSMSE than the ML estimator and that
the MBP estimator has the best performance in both terms.
Since  no $\Psi$-unbiased estimator exists, the  $\Psi_{{\text{SMS}}}$-CRB is higher than the  actual PSMSE values, but the biased $\Psi$-CRB is a valid bound for any $N$.
However, it can be seen that the $\Psi_{{\text{SMS}}}$-CRB gives an indication of the performance behavior and that asymptotically it is attained by the PSML estimator and coincides with the biased $\Psi$-CRB.
 \begin{figure}[htb]
\vspace{-0.25cm}
\centerline{\psfig{figure=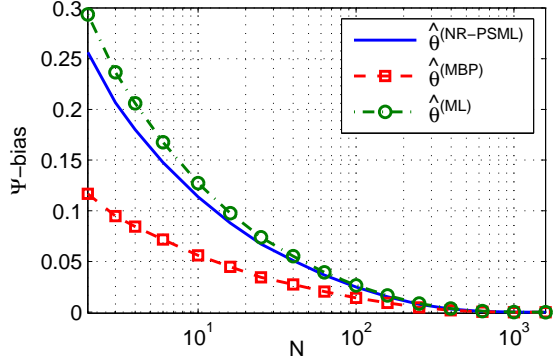,width=8cm}}
\caption{The $\Psi_{{\text{SMS}}}$-bias of the ML, NR-PSML, and MBP estimators  
for estimation after parameter selection  with independent  Gaussian distributed populations and the SMS  rule.}
\label{bias_Gaussian}
\vspace{-0.25cm}
\end{figure}
 \begin{figure}[htb]
\vspace{-0.25cm}
\centerline{\psfig{figure=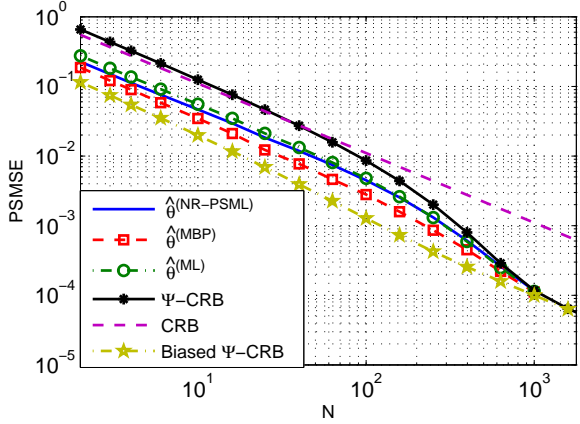,width=8.5cm}}
\caption{The  PSMSE of the ML, NR-PSML, and MBP estimators  
and the $\Psi$-CRB
for estimation after parameter selection
with independent Gaussian distributed populations and the SMS rule.}
\label{MSSE_Gaussian}
\vspace{-0.25cm}
\end{figure}

\subsection{Exponential distribution}
Consider the following observation model:
\beqna
\label{model_exponential}
f_{m}(y_m[n];\theta_m)=\left\{
\begin{array}{lr}\frac{1}{\theta_m}e^{-\frac{y_m[n]}{\theta_m}}&0<y_m[n]\\
0&{\text{otherwise}}
\end{array}
\right.,~m=1,2,
\eeqna
for all $ n=0,\ldots,N-1$, where the parameters $\theta_m>0$ for all  $ m=1,2$ are unknown.
The populations are assumed to be independent. 
In this problem, the SMS rule selects the population with the largest sample-mean,  \ie 
\be
\label{SMS_exponential}
\Psi_{{\text{SMS}}}=\arg \max_{m=1,2}\left\{\hat{\theta}_m^{({\mbox{\tiny ML}})}\right\},
\ee
where
\be
\label{ML_est_exponential}
\hat{\theta}_m^{({\mbox{\tiny ML}})}\define \frac{1}{N}\sum_{n=0}^{N-1} y_{m}[n]
\ee
is the ML estimator of $\theta_m$ for $m=1,2$.
The   probability of selecting population $m$ via  the SMS rule is given by the  negative binomial cdf \cite{Vellaisamy_Sharma_1988}:
\beqna
\label{prob_exponential}
\Pr(\Psi_{{\text{SMS}}}=m;\thetavec)=
\sum_{j=0}^{N-1}
\left(\begin{array}{c}N+j-1\\j \end{array}\right)
q_m^N (1-q_m)^j
\eeqna
for any $ \theta_m,\theta_k>0$,
where 
$q_m\define \frac{ \theta_m  }{ \theta_m+\theta_k}$.

\subsubsection{Estimation}
The ML estimator from (\ref{ML_est_exponential}) is  an $\Psi_{{\text{SMS}}}$-biased estimator in this case, since
  \cite{Vellaisamy_Sharma_1988}
\beqna
\label{expectation_m2}
{\rm{E}}_\thetavecsmall\left[\left. \hat{\theta}_m^{({\mbox{\tiny ML}})}-\theta_m
\right|\Psi_{{\text{SMS}}}=m\right]
=\theta_m \alpha_m
\eeqna
for any $ \theta_m,\theta_k>0$, where
$ m,k=1,2$, $m\neq k$
and
\[\alpha_m\define 
\frac{ 
1}{\Pr(\Psi_{{\text{SMS}}}=m;\thetavec)}\left(\begin{array}{c}2N-1\\N \end{array}\right)
q_m^N(1-q_m)^N.
\]
In addition, by using (\ref{expectation_m2}) it can be verified that
\beqna
\label{expectation_k2}
{\rm{E}}_\thetavecsmall\left[\left. \hat{\theta}_k^{({\mbox{\tiny ML}})}
-\theta_k
\right|\Psi_{{\text{SMS}}}=m\right]
=-\theta_k \alpha_m
\eeqna
for any $ \theta_m,\theta_k>0$, where
$m,k=1,2$, $m\neq k$.
The $\Psi_{\text{SMS}}$-biases in (\ref{expectation_m2}) and (\ref{expectation_k2}) are positive and negative, respectively,  thus they tend to overestimate  the parameter of the selected population and 
underestimate that of the unselected
 one.
As an alternative to the ML estimator,  the following U-V estimator is proposed in \cite{Vellaisamy_Sharma_1988}: 
\be
\label{UV_exponential}
\hat{\theta}_m^{({\mbox{\tiny U-V}})}=\hat{\theta}_m^{({\mbox{\tiny ML}})}
-
\frac{\left(\hat{\theta}_k^{({\mbox{\tiny ML}})}\right)^N}{\left(\hat{\theta}_m^{({\mbox{\tiny ML}})}\right)^{N-1}}, ~m,k=1,2,~~m\neq k.
\ee
It is also shown that
$
{\rm{E}}_\thetavecsmall\left[\left(\hat{\theta}_m^{({\mbox{\tiny U-V}})}-{\theta}_m\right){\mathbf{1}}_{\{\Psi_{{\text{SMS}}}=m\}}
\right]
=0
$
and thus, according to (\ref{unbias6}), the U-V estimator  is an $\Psi_{{\text{SMS}}}$-unbiased estimator.

In the following, we derive the PSML estimator for $\Psi_{{\text{SMS}}}=m$.
The results for $\Psi_{{\text{SMS}}}=k$, $k\neq m$ are straightforward.
For the sake of simplicity, the elements of $\thetavec$ are reordered such that
the first element is the selected one, \ie 
$\thetavec=[\theta_m,\theta_k]^{\mbox{\tiny $T$}}$.
The  PSML estimator   from (\ref{PSML}) maximizes
the post-selection likelihood, which is given in this case by
\beqna
\label{cond_like_exp1}
 &&\hspace{-1cm}\log f(\xvec|\Psi_{{\text{SMS}}}=m;\thetavec)
\nonumber\\
&=&-N \log \theta_m
-N\log \theta_k
-  \frac{N \hat{\theta}_m^{({\mbox{\tiny ML}})}}{\theta_m}
-  \frac{N \hat{\theta}_k^{({\mbox{\tiny ML}})}}{\theta_k}
\nonumber\\&&
-\log\left(
\sum_{j=0}^{N-1}
\left(\begin{array}{c}N+j-1\\j \end{array}\right) q_m^N (1-q_m)^j\right)
\eeqna
for any $ \theta_m,\theta_k>0$ and $\xvec\in {\mathcal{A}}_{m}$.
By using (\ref{cond_like_exp1}), it can be verified that
the 
gradient vector of $f(\xvec|\Psi_{{\text{SMS}}}=m;\thetavec)$  w.r.t. $\thetavec$  is given by
\beqna
\label{gradient}
 \nabla_\thetavecsmall \log f(\xvec|\Psi_{{\text{SMS}}}=m;\thetavec)
=
N\left[
\begin{array}{c}
 \frac{ \hat{\theta}_m^{({\mbox{\tiny ML}})}-\theta_m+\theta_m f(q_m)}{\theta_m^2}
\\
\frac{ \hat{\theta}_k^{({\mbox{\tiny ML}})}-\theta_k-\theta_k f(q_m)}{\theta_k^2}
\end{array}
\right],
\eeqna
for any $ \theta_m,\theta_k>0$,
$ m,k=1,2$, and $m\neq k$,
where
\be
f(q_m)\define (1-q_m)(q_m h(q_m)-1)
\ee
and
\be
h(q_m)
\define \frac{\sum_{j=0}^{N-2}
\left(\begin{array}{c}N+j\\j \end{array}\right)
q_m^N (1-q_m)^{j}}{\Pr(\Psi_{\text{SMS}}=m;\thetavec)},~m=1,2.
\ee
According to (\ref{PSML2}),
by equating the r.h.s. of (\ref{gradient}) to zero we obtain 
the  PSML estimator for $\xvec\in {\mathcal{A}}_{m}$:
\beqna
\label{derivative_1}
\left[\begin{array}{c}\hat{\theta}_{m}^{({\mbox{\tiny PSML}})}
\\\hat{\theta}_{k}^{({\mbox{\tiny PSML}})}\end{array}\right]
=
\left[\begin{array}{c}\frac{\hat{\theta}_m^{({\mbox{\tiny ML}})}}
{1
-f(\hat{q}_m)  }
\\
 \frac{ \hat{\theta}_k^{({\mbox{\tiny ML}})}}{1
+f(\hat{q}_m) }
\end{array}\right],
\eeqna
where
 $\hat{q}_m\define \frac{ \hat{\theta}_{m}^{({\mbox{\tiny PSML}})}  }{ \hat{\theta}_{m}^{({\mbox{\tiny PSML}})}+\hat{\theta}_{k}^{({\mbox{\tiny PSML}})}}$
and $\xvec \in {\mathcal{A}}_{m}$.
Equation
(\ref{derivative_1}) 
implies 
 that the ratios 
$ \frac{ \hat{\theta}_{m}^{({\mbox{\tiny PSML}})}}{ \hat{\theta}_m^{({\mbox{\tiny ML}})}}$
and
$ \frac{ \hat{\theta}_{k}^{({\mbox{\tiny PSML}})}}{ \hat{\theta}_k^{({\mbox{\tiny ML}})}}$ are only functions of
the statistic $ \frac{\hat{\theta}_k^{({\mbox{\tiny ML}})}}{\hat{\theta}_m^{({\mbox{\tiny ML}})}}$.
That is, the PSML estimator is a function of the ML estimator multiplied by a correction factor, which is a function of the ML estimators' ratio. 

The solution to (\ref{derivative_1}) can be found for the general case by an exhaustive search over
 $\hat{\thetavec}^{({\mbox{\tiny PSML}})}$
or by using the iterative  methods from Section \ref{itr_section}.
For example, 
by substituting  (\ref{prob_exponential}), (\ref{diff_diff}), and (\ref{25}) in (\ref{by_parts_max}),  the $i$th iteration of the  MBP method is obtained.

\subsubsection{$\Psi_{\text{SMS}}$-CRB}
The PSFIM can be obtained 
by using the derivative of (\ref{gradient}), applying the expectation operator, and
 using 
(\ref{expectation_m2}) and (\ref{expectation_k2}). 
Then, the  $\Psi_{\text{SMS}}$-CRB is obtained by substituting the PSFIM
in  (\ref{bound2}).
The explicit $\Psi_{\text{SMS}}$-CRB is omitted from this paper due to space limitation.

\subsubsection{$\Psi_{\text{SMS}}$-efficiency}
For the special case of 
 $N=1$,   it can be shown that  the  $\Psi_{\text{SMS}}$-CRB  is given by
\beqna
\label{bound2_exp}
B^{(\Psi_{{\text{SMS}}})}(\thetavec)
=\frac{ \theta_m^3 + \theta_k^3}{ \theta_m+\theta_k }.
\eeqna
The PSML estimator in this case is given by
\beqna
\label{SML_m1}
\left[\begin{array}{c}\hat{\theta}_{m}^{({\mbox{\tiny PSML}})}
\\\hat{\theta}_{m}^{({\mbox{\tiny PSML}})}
\end{array}\right]
=   
\left[\begin{array}{c}y_m[0]
-y_k[0]
\\y_k[0]\frac{   y_m[0]
-y_k[0]}{ y_m[0]
-2y_k[0]}
\end{array}\right]
\eeqna
for any  $y_m[0]\geq 
y_k[0]$ and $y_m[0]\neq 
2y_k[0]$. 
For $y_k[0]\geq 
y_m[0]$, we can change the roles of $\theta_m$ and $\theta_k$ to obtain the PSML estimator.
It can be seen that for $N=1$, the PSML and U-V estimators from  (\ref{SML_m1}) and (\ref{UV_exponential}), respectively, of the selected parameter coincides, \ie
$\hat{\theta}_{m}^{({\mbox{\tiny PSML}})}=\hat{\theta}_m^{({\mbox{\tiny U-V}})}$.
Thus, the PSML estimator is an $\Psi_{\text{SMS}}$-unbiased estimator for $N=1$.
In addition, we can verify analytically that the PSMSE of $\hat{\thetavec}^{({\mbox{\tiny PSML}})}$ attains the $\Psi_{\text{SMS}}$-CRB from (\ref{bound2_exp}).
Thus, $\hat{\thetavec}^{({\mbox{\tiny PSML}})}$ is an 
 $\Psi_{\text{SMS}}$-efficient estimator  for $N=1$.

The PSMSE performance of the estimators $\hat{\thetavec}^{({\mbox{\tiny ML}})}$ and
$\hat{\thetavec}^{({\mbox{\tiny PSML}})}$
 with the SMS  rule are
evaluated using $100,000$ Monte-Carlo simulations 
and are compared with the $\Psi_{\text{SMS}}$-CRB for $N=1$ and $\theta_1=5$.
The results are presented in Figs.  \ref{bias_exp} and \ref{PSMSE_exp}.
It can be seen that the PSML estimator is an $\Psi$-unbiased estimator  and has a lower PSMSE than the ML estimator.
Moreover, 
 the  $\Psi_{{\text{SMS}}}$-CRB is  achievable by  $\hat{\thetavec}^{({\mbox{\tiny PSML}})}$, which is an  $\Psi_{\text{SMS}}$-efficient estimator  in this case.
 \begin{figure}[htb]
\vspace{-0.25cm}
\centerline{\psfig{figure=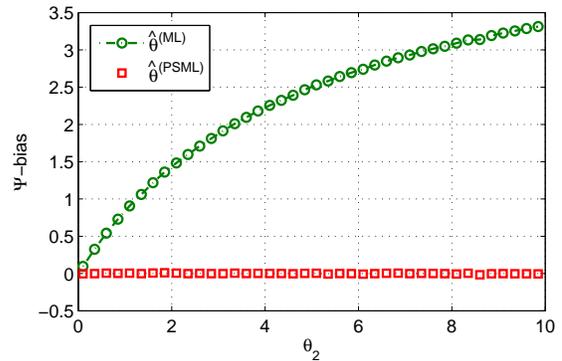,width=8cm}}
\caption{The $\Psi_{{\text{SMS}}}$-bias of the ML and PSML estimators  
for estimation after parameter selection with independent exponential distributed populations and the SMS rule.}
\label{bias_exp}
\vspace{-0.25cm}
\end{figure}
 \begin{figure}[htb]
\vspace{-0.25cm}
\centerline{\psfig{figure=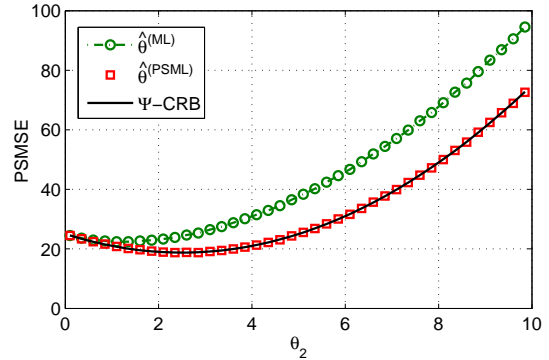,width=8cm}}
\caption{The  PSMSE of the ML and PSML estimators and the $\Psi$-CRB  for estimation after parameter selection
with independent exponential distributed populations and the SMS rule.}
\label{PSMSE_exp}
\vspace{-0.25cm}
\end{figure}

\section{Conclusion}
 \label{diss}
 In this paper, 
the concept of non-Bayesian  estimation after parameter selection is introduced 
and the $\Psi$-unbiasedness in the Lehmann sense is defined,
for  arbitrary data-driven parameter selection rules.
We derive a  Cram$\acute{\text{e}}$r-Rao-type bound 
for the selected deterministic parameters.
 Unlike the conventional CRB,  the proposed $\Psi$-CRB provides a valid bound in estimation after parameter selection problems.
The PSML estimator is proposed and its properties and practical implementations aspects are discussed.
In particular, 
it is proved that if there exists  an $\Psi$-efficient estimator,
 then it is produced by the PSML estimator.
The new paradigm opens a wide range of interesting directions, such as  
multistage procedures that involve active learning and sequential data sampling.

\section*{\normalsize{Appendix A: Proof of Proposition \ref{unbiasedness_prop}}}
\label{Appendix A}
In this Appendix, we prove Proposition \ref{unbiasedness_prop}
in a similar way to the proof
of mean-unbiasedness under a conventional squared error cost function (Page 11 in \cite{Lehmann}).
By substituting the  PSSE cost function from (\ref{cost_def})
in (\ref{defdef})
 the Lehmann-unbiasedness  condition  is given by 
\beqna
\label{def_per_general}
{\rm{E}}_{\thetavecsmall}\left[\sum_{m=1}^M
(\hat{\theta}_m-\eta_m)^2{\mathbf{1}}_{\{\Psi=m\}}\right]\geq \hspace{2.75cm}
\nonumber\\
  {\rm{E}}_{\thetavecsmall}\left[\sum_{m=1}^M
(\hat{\theta}_m-\theta_m)^2{\mathbf{1}}_{\{\Psi=m\}}
\right],~~~\forall \thetavec,\etavec\in {\mathbb{R}}^M,
\eeqna
where $\etavec=[\eta_1,\ldots,\eta_M]^{\mbox{\tiny $T$}}$ is an arbitrary vector.
The condition in (\ref{def_per_general}) can be rewritten as
\beqna
\label{def_per_general2}
{\rm{E}}_{\thetavecsmall}\left[\sum_{m=1}^M
(\hat{\theta}_m-\theta_m+\theta_m-\eta_m)^2{\mathbf{1}}_{\{\Psi=m\}}\right]\hspace{1.5cm}
\nonumber\\
 \geq  {\rm{E}}_{\thetavecsmall}\left[\sum_{m=1}^M
(\hat{\theta}_m-\theta_m)^2{\mathbf{1}}_{\{\Psi=m\}}
\right],~~~\forall \thetavec,\etavec\in {\mathbb{R}}^M.
\eeqna
By using the linearity of  the expectation operator
and the fact that  $\thetavec$ and $\etavec$ are deterministic vectors, 
it can be verified that
(\ref{def_per_general2}) is equivalent to
\beqna
\label{def_per_general3}
\sum_{m=1}^M(\theta_m-\eta_m){\rm{E}}_{\thetavecsmall}\left[
(\hat{\theta}_m-\theta_m){\mathbf{1}}_{\{\Psi=m\}}\right]
\nonumber\\
 \geq  -\sum_{m=1}^M(\theta_m-\eta_m)^2 \Pr(\Psi=m;\thetavec),
\eeqna
$\forall \thetavec,\etavec\in {\mathbb{R}}^M$,
where we used ${\rm{E}}_{\thetavecsmall}\left[{\mathbf{1}}_{\{\Psi=m\}}\right]=\Pr(\Psi=m;\thetavec)$.
.\\
{\bf{Sufficient condition - }} 
It can be verified that  if (\ref{unbias6})  holds,
 then the inequality in (\ref{def_per_general3}) holds since the r.h.s. of (\ref{def_per_general3}) is nonpositive.\\
{\bf{Necessary condition - }} The necessity of  (\ref{unbias6})  is proven by
using specific choices of $\etavec$.
 By substituting  
$\eta_m=\theta_m$, for all $m=1,\ldots,M$, $m\neq l$, and $\eta_l=\theta_l+\varepsilon$
in  (\ref{def_per_general3}), 
we obtain the following necessary condition:
\be
\label{def_per444}
\varepsilon{\rm{E}}_{\thetavecsmall}\left[
(\hat{\theta}_l-\theta_l){\mathbf{1}}_{\{\Psi=l\}}\right]
 \geq  -\varepsilon^2  \Pr(\Psi=l;\thetavec)
\ee
for any $\thetavec \in{\mathbb{R}}^M$, $\varepsilon\in{\mathbb{R}}$.
Since $\varepsilon$ can be either positive or negative and $\Pr(\Psi=l;\thetavec)\geq 0$, the condition in 
(\ref{def_per444}) implies (\ref{unbias6}) for any $l=1,\ldots,M$.
In addition, since
\beqna
{\rm{E}}_{\thetavecsmall}[ (\hat{\theta}_m-\theta_m){\mathbf{1}}_{\{\Psi=m\}}]
= {\rm{E}}_{\thetavecsmall}[\hat{\theta}_m-\theta_m|\Psi=m]\Pr(\Psi=m;\thetavec ), \nonumber
\eeqna 
$ m=1,\ldots,M$,  
 then,  for any $ m=1,\ldots,M$ such   that $\Pr(\Psi=m;\thetavec )\neq 0$, 
the condition in (\ref{unbias6}) 
 is equivalent  to (\ref{10}).

\section*{Appendix B. Proof of Lemma \ref{lemma1}}
In this Appendix, two alternative formulations of the PSFIM are derived.
Similar derivations can be found in \cite{Chaumette2005}
in the context of  post-detection estimation.
By using  (\ref{Bayes}), one obtains
\beqna
\label{log_divide}
\nabla_\thetavecsmall \log f(\xvec|\Psi=m;\thetavec)\hspace{4cm}\nonumber\\
= \nabla_\thetavecsmall \log f(\xvec;\thetavec)
-\nabla_\thetavecsmall \Pr(\Psi=m;\thetavec),
\eeqna
$\forall \xvec\in {\mathcal{A}}_{m}$,
 and by substituting (\ref{log_divide}) in
(\ref{JJJdef}), one obtains
\beqna
\label{JJJdef_re}
\Jmat_m(\thetavec,\Psi)= \hspace{5.75cm}\nonumber\\ {\rm{E}}_{\thetavecsmall}\left[\left.
 \nabla_\thetavecsmall \log f(\xvec;\thetavec)
\nabla_\thetavecsmall^T\log f(\xvec;\thetavec) \right|\Psi=m
\right]\hspace{0.9cm}
\nonumber\\- \nabla_\thetavecsmall \log\Pr(\Psi=m;\thetavec)
{\rm{E}}_{\thetavecsmall}\left[\left.\nabla_\thetavecsmall^T \log f(\xvec;\thetavec) \right|\Psi=m
\right]
\nonumber\\-{\rm{E}}_{\thetavecsmall}\left[\left.
 \nabla_\thetavecsmall \log f(\xvec;\thetavec)\right|\Psi=m
\right]\nabla_\thetavecsmall^T \log \Pr(\Psi=m;\thetavec)
\nonumber\\+
 \nabla_\thetavecsmall \log \Pr(\Psi=m;\thetavec)
\nabla_\thetavecsmall^T\log\Pr(\Psi=m;\thetavec).\hspace{0.9cm}
\eeqna
Since ${\mathcal{A}}_{m}$ is independent of $\thetavec$
and by using regularity condition \ref{cond2},
it can be noticed that
\beqna
\label{nine}
&&\hspace{-2cm}\nabla_\thetavecsmall \log \Pr(\Psi=m;\thetavec)=
\frac{\nabla_\thetavecsmall \Pr(\Psi=m;\thetavec)}{\Pr(\Psi=m;\thetavec)}
\nonumber\\
&=&\frac{\nabla_\thetavecsmall \int_{{\mathcal{A}}_{m}}f(\xvec;\thetavec)\ud \xvec}{\Pr(\Psi=m;\thetavec)}
=\frac{\int_{{\mathcal{A}}_{m}} \nabla_\thetavecsmall f(\xvec;\thetavec)\ud \xvec}{\Pr(\Psi=m;\thetavec)}
\nonumber\\
&=&{\rm{E}}_{\thetavecsmall}\left[\left.
\nabla_\thetavecsmall \log f(\xvec;\thetavec) \right|\Psi=m
\right].
\eeqna
Substitution of (\ref{nine}) in (\ref{JJJdef_re}) results in  (\ref{JJJdef1.5}).

In addition, under the assumption that 
the integral
 $\int_{{\mathcal{A}}_{m}} f(\xvec|\Psi=m ;\thetavec){\ud}\xvec$
 can be twice differentiated under the integral
sign, it is known   that (Lemma 2.5.3 in \cite{point_est}): 
\[{\rm{E}}_{\thetavecsmall}\left[\left.
\nabla_\thetavecsmall \log f(\xvec|\Psi=m ;\thetavec)\right|\Psi=m\right]=0\]
for any $ \thetavec \in{\mathbb{R}}^M$.
Therefore, 
by using the product rule twice 
we obtain
\beqna
\label{thanks1}
\hspace{-0.5cm}\Jmat_m(\thetavec,\Psi)&=& {\rm{E}}_{\thetavecsmall}\left[\left.
\nabla_\thetavecsmall \log f(\xvec|\Psi=m;\thetavec)\right.\right.
\nonumber\\&&\times
\left.\left.
\nabla_\thetavecsmall^T\log f(\xvec|\Psi=m;\thetavec) \right|\Psi=m
\right]
 \nonumber\\
 &=&\nabla_\thetavecsmall {\rm{E}}_{\thetavecsmall}\left[\left.
\nabla_\thetavecsmall^T\log f(\xvec|\Psi=m ;\thetavec)\right|\Psi=m\right] \nonumber\\
 &&-{\rm{E}}_{\thetavecsmall}\left[\left.
\nabla_\thetavecsmall^2 \log f(\xvec|\Psi=m;\thetavec) \right|\Psi=m\right]
\nonumber\\&=&
 -{\rm{E}}_{\thetavecsmall}\left[\left.
\nabla_\thetavecsmall^2 \log f(\xvec|\Psi=m;\thetavec) \right|\Psi=m\right].
\eeqna
By substituting  (\ref{log_divide}) in (\ref{thanks1}), we obtain (\ref{JJJdef2}).

\section*{Appendix C: Numerical PSML estimation method}
In some instances, the post-selection likelihood function and its gradient are intractable, so that finding the PSML, even by the iterative methods in 
(\ref{iterative_NR2})
and
(\ref{fisher_scoring2}),
 may be difficult. 
In these cases, we can  use the previous estimator $\hat{\thetavec}^{(i)}$ to construct a nonparametric estimator 
 of
$\gvec_m(\thetavec)\define \nabla_\thetavecsmall \log \Pr(\Psi=m;\thetavec) 
$
at $\thetavec=\hat{\thetavec}^{(i)}$
 from simulated realizations of the observation model and 
then, to substitute them 
 in
(\ref{iterative_NR2}) or
(\ref{fisher_scoring2}).
The resulting iterative PSML (IPSML) algorithm
is described  in Table \ref{Box1}.
\texttt{ 
\begin{table}[h]
\vspace{-0.25cm}
\caption{The IPSML  algorithm}
\begin{tabular}{|p{8cm}|}
\hline\\[3pt]
{\bf{Initialization:}} 
Fix $i=0$ and 
set
 the temporary estimator 
$\hat{\thetavec}^{(0)}
=\hat{\thetavec}^{({\mbox{\tiny ML}})}$.\\
{\bf  Main iteration:} 
Increment $i\rightarrow i+1$ and apply
\begin{enumerate}
\item {\bf{Empirical gradient:}}
 For any $l=1,\ldots,M$:
\begin{enumerate}
\item Generate  data $\tilde{\xvec}_k^+$ and $\tilde{\xvec}_k^-$ according to the pdf's
$f(\cdot;\hat{\thetavec}^{(i)}+\frac{\Delta}{2}\evec_l)$ and
$f(\cdot;\hat{\thetavec}^{(i)}-\frac{\Delta}{2}\evec_l)$ for $k=1,\ldots,K$.
\item Evaluate the empirical partial derivative
 by:
\beqna
\left[\hat{\gvec}_m\left(\thetavec=\hat{\thetavec}^{(i)}\right)\right]_l
\approx \frac{ 1}{\Delta}\times\hspace{3
cm}
\nonumber\\\hspace{-0.5cm}
\log\left(\frac{1}{K}\sum\limits_{k=1}^K {\mathbf{1}}_{\Psi(\tilde{\xvec}_k^+)=m}\right)
 -\log\left(\frac{1}{K}\sum\limits_{k=1}^K {\mathbf{1}}_{\Psi(\tilde{\xvec}_k^-)=m}\right)
.\nonumber
\eeqna
\item {\bf{Update gradient:}} 
\beqna
\label{cond_log_approx}
\left. \nabla_\thetavecsmall \log f(\xvec|\Psi=m;\thetavec)\right|_{\thetavecsmall=\hat{\thetavecsmall}^{(i)}}\hspace{1.5cm}\nonumber\\
\approx\left. \nabla_\thetavecsmall f(\xvec;\thetavec)\right|_{\thetavecsmall=\hat{\thetavecsmall}^{(i)}}
-\hat{\gvec}_m(\thetavec=\hat{\thetavec}^{(i)}).
\eeqna
\item {\bf{Update estimation:}}  Substitute (\ref{cond_log_approx}) in (\ref{iterative_NR2}) or
(\ref{fisher_scoring2}).
\end{enumerate}
\item {\bf{Stopping rule:}} iterate till convergence
\end{enumerate}
{\bf{Output:}}  $\hat{\thetavec}^{({\mbox{\tiny IPSML}})}=\hat{\thetavec}^{(i)}$.
\\[3pt]
 \hline
\end{tabular}
\label{Box1}
\vspace{-0.25cm}
\end{table}
}

\bibliographystyle{IEEEtran}
\bibliography{errorbound5}

\end{document}